\numberwithin{equation}{section}
\newtheorem{Theorem}{Theorem}
\newtheorem{Lemma}[Theorem]{Lemma}
\newtheorem{Corollary}[Theorem]{Corollary}
\newtheorem{Remark}{Remark}
\title[Rigidity for Minkowski spacetime]{Remark on the nonlinear stability of Minkowski spacetime: a rigidity theorem
	}
\author[Jin Jia]{Jin Jia}
\address{Department of Mathematics and Yau Mathematical Sciences Center, Tsinghua University\\ Beijing, China}
\email{jiaj18@mails.tsinghua.edu.cn}
\author[Pin Yu]{Pin Yu}
\address{Department of Mathematics and Yau Mathematical Sciences Center, Tsinghua University\\ Beijing, China}
\email{yupin@mail.tsinghua.edu.cn}
\begin{document}
\begin{abstract}
In the framework of the nonlinear stability of Minkowski spacetime, we show that if the radiation field of the curvature tensor vanishes, the spacetime must be flat.
\end{abstract}

\maketitle
\tableofcontents
\section{Introduction}

In general relativity, the objects under consideration are given by $(M,g)$ where $M$ is a 4-dimensional smooth manifold and $g$ is a Lorentzian metric on $M$. Let ${\rm Ric}(g)$ and $R(g)$ be the Ricci and scalar curvatures of the metric $g$ respectively. The Einstein vacuum equations are given by
\begin{equation*}
{\rm Ric}(g) - \frac{1}{2} R(g)\cdot g = 0.
\end{equation*}
Equivalently, it requires that $(M,g)$ is Ricci flat, i.e.,
 \begin{equation*}
{\rm Ric}(g) =0.
\end{equation*}

Let $\Sigma$ be a three dimensional differentiable manifold. A Cauchy initial datum for vacuum Einstein field equations on $\Sigma$ consists of a Riemannian metric $\overline{g}$ and a symmetric two tensor $\overline{k}$ (as the second fundamental form) subject to the following  constraint equations:
\begin{equation}\label{constraint}
\begin{cases}
R(\overline{g}) -|\overline{k}|^2 + \left({\rm tr}_{\overline{g}} \overline{k}\right)^2 &= 0,\\
{\rm div}_{\overline{g}} \overline{k} - \overline{\nabla}  \left({\rm tr}_{\overline{g}} \overline{k}\right) &=0,
\end{cases}
\end{equation}
where $R(\overline{g})$ is the scalar curvature of the metric $\overline{g}$ and $\overline{\nabla}$ is the Levi-Civita connection associated to $\overline{g}$. The Cauchy problem for vacuum Einstein equations with the given data $(\Sigma, \overline{g},\overline{k})$ is to construct a spacetime $(M, g)$  so that there exists an isometric embedding  $\Sigma \subset M$ in such a way  that $\overline{g}$ coincides with the restriction of $g$ to  $\Sigma$ and $\overline{k}$ coincides with the second fundamental form of $\Sigma$.

The local existence of the Einstein equations in harmonic gaugehas already been established by Choquet-Bruhat in \cite{Choquet-Bruhat} in 1952. Later on, Choquet-Bruhat and Geroch \cite{Choquet-Bruhat-Geroch} showed  the existence of a maximal Cauchy development to the Einstein equations for sufficiently smooth initial data. These results correspond to the local existence theory in the language of PDEs and the global behavior of the Einstein equations is much more challenging. Nevertheless, in 1993 Christodoulou and Klainerman have made a breakthrough in this direction: they proved the nonlinear stability of Minkowski spacetime \cite{Christodoulou-Klainerman}. Roughly speaking, they have  constructed a global solution to the Einstein vacuum equations from the data sufficiently close to a spacelike hyperplane in the Minkowski and they also showed that the solution asymptotically approach to the Minkowski space for large time. The proof of Christodoulou and Klainerman is a landmark in the field of mathematical general relativity. Afterwards, the result was extended in many other situations, see \cite{Klainerman-Nicolo1}, \cite{Klainerman-Nicolo2}, \cite{Bieri-Zipser}, \cite{Lindblad-Rodnianski1}, \cite{Lindblad-Rodnianski2}, \cite{Lindblad} and \cite{Speck} for Einstein vacuum equations or Einstein-Maxwell equations, see \cite{BFJ},\cite{Lindblad-Taylor} and \cite{XWang} for the Einstein-Vlasov system, see \cite{Wang}, \cite{LeFloch-Ma} and \cite{Ionescu-Pausader} for the Einstein-Klein-Gordon system.

\subsection{A review on Christodoulou-Klainerman's work}

In the rest of the paper, we require that the initial datum triplet $(\Sigma,\overline{g},\overline{k})$ is \emph{strongly asymptotically flat}. This means that in addition to the constraint equations \eqref{constraint}, $(\Sigma,\overline{g},\overline{k})$ statisfies the following properties: 
\begin{itemize}
\item The three dimensional manifold $\Sigma$ is diffeomorphic to $\mathbb{R}^3$. We use $(x^1,x^2,x^3)$ as the standard coordinate system on $\Sigma$. We also use $r=\sqrt{\left(x^1\right)^2+\left(x^2\right)^2+\left(x^3\right)^2}$ to denote the usual radius function. 
\item For $r\rightarrow \infty$, in the coordinate frame, the induced metric $\overline{g}$ and the second fundamental form $\overline{k}$ satisfy:
\begin{itemize}
\item For all $i,j \in \{1,2,3\}$, we have 
\[\overline{g}_{ij}=\left(1+\frac{2M}{r}\right)\delta_{ij}+O_4(r^{-\frac{3}{2}}).\]
\item For all $i,j \in \{1,2,3\}$, we have 
\[\overline{k}_{ij}=O_3(r^{-\frac{5}{2}}).\]
\end{itemize}
In the above expressions, we denote by $O_k(r^a)$ any smooth function $f$ defined on  $\Sigma$ which satisfies 
 \[|\partial^i f|=O(r^{a-i})\]  for any $0\leq i\leq k$ with $\displaystyle |\partial^i
f|=\sum_{i_1+i_2+i_3=i}
|\partial_1^{i_1}\partial_2^{i_2}\partial_3^{i_3}
f|$.
\end{itemize}

We fixed a marked point $x_{(0)} \in \Sigma$. Let $d\left(x_{(0)} , x\right)$ be the geodesic distance on $\Sigma$ between $x_{(0)}$ and $x\in \Sigma$ with respect to the metric $\overline{g}$. This defines a function on $\Sigma$:
\[d_0: \Sigma \rightarrow \mathbb{R}_{\geq 0}, \ \ x\mapsto d(x)=d\left(x_{(0)} , x\right).\]
We also introduce the following symmetric and traceless (with respect to $\overline{g}$) 2-tensor:
\[\overline{B}_{ij}=\overline\epsilon_j{}^{ab}\overline{\nabla}_a\left(\overline{R}_{ib}-\frac{1}{4}\overline{g}_{ib}\overline{R}\right),\]
where $\overline{\epsilon}_{ijk}$ is the volume form of $\overline{g}$, $\overline{R}_{ij}$ and $\overline{R}$ are the Ricci and scalar curvature of the metric $\overline{g}$. We also introduce
\begin{align*}Q\left(x_{(0)}\right)&=\int_{\Sigma} \sum_{l=0}^3 \left(d_0(x)^2+1\right)^{l+1}|\overline{\nabla}^l \overline{k}|^2 dx+ \int_{\Sigma}\sum_{l=0}^1 \left(d_0(x)^2+1\right)^{l+3}|\overline{\nabla}^l \overline{B}|^2 dx\\
&\ \ \ +\sup_{x\in \Sigma}\left\{\left(d_0(x)^2+1\right)^3|{\rm Ric}(\overline{g})(x)|^2\right\}.
\end{align*}
The first version of the theorem of Christodoulou and Klainerman can be stated as follows:
\medskip

\emph{
Given a strongly asymptotically flat maximal \footnote{A maximal initial datum $(\Sigma, \overline{g}, \overline{k})$ satisfies ${\rm tr}_{\overline{g}} \overline{k}=0$.} initial datum $(\Sigma, \overline{g}, \overline{k})$ to the vacuum Einstein equatiions, there exists a positive constant $\varepsilon_0$, so that if $(\Sigma, \overline{g}, \overline{k})$ satisfies the following global smallness assumption:
\[\int_{x_{(0)}\in \Sigma}Q(x_{(0)})<\varepsilon_0,\]
it leads to a unique, globally hyperbolic, smooth and geodesically complete solution of the Einstein vacuum equations foliated by a normal maximal time function $t$, defined for all $t\geq -1$. We denote this spacetime by $(M,g)$ and $t=0$ corresponds to $\Sigma$. Moreover,this development is globally asymptotically flat.
}
\subsection{The geometry of the spacetime in null frames}
Let $(M,g)$ be the space time constructed in \cite{Christodoulou-Klainerman}. For the region where $t\geq 0$, $M$ admits global foliations $\{\Sigma_{t}\}_{t\geq 0}$ given by level sets of $t$ and an exterior foliation $\{C_{u}\}_{u\in \mathbb{R}}$ given by level sets of an optical function $u$. For a fixed time $t$,  when $u$ changes, the following spheres
\[S_{t,u}=\Sigma_{t}\cap C_{u}
\]
gives an exterior foliation of $\Sigma_{t}$. We use $\theta$ to denote the second fundamental form of the embedding $S_{t,u} \hookrightarrow \Sigma_{t}$.

Let $T$ be the future directed unit normal of $\Sigma_t$ and $N$ be the outward unit normal of $S_{t,u}$ in $\Sigma_{t}$. We define the standard null pair $(e_{3},e_{4})$ given by the following formulas:
\[e_{3}=T-N,\ \ e_{4}=T+N.\]
In particular, we have $g(e_{3},e_{4})=-2$. The projection operator $\Pi$, which projects  tensors on ${M}$ to the tensor tangential to $S_{t,u}$, can be computed in terms of the standard null pair:
\[\Pi^{\mu v}=g^{\mu\nu}+\frac{1}{2}\left(e_{3}^{\mu}e_{4}^{\nu}+e_{4}^{\mu}e_{3}^{\nu}\right).
\]
We use  $(e_1,e_2,e_3,e_4)$ to denote a standard null pair on $M$, where $(e_3,e_4)$ is supplemented by $(e_A)_{A = 1, 2}$, a local orthonormal frame field for $S_{t,u}$. In the rest of the paper, we will use capital Roman letters $A,B,\cdots$ to denote indices $\{1,2\}$ and use Greek letters $\mu,\nu,\cdots$ to denote indices $\{1,2,3,4\}$. Repeating indices will be understood by the Einstein's summation convention.

We also introduce the null pair $(L,\underline{L})$ defined by the gradient of the optical function $u$. Therefore, we have
\[L^{\mu}=-g^{\mu\nu}\frac{\partial u}{\partial x^{\nu}}=a^{-1}(T+N),\ \ \underline{L}=a(T-N),\]
where $a$ is the lapse function of $u$ on $\Sigma_t$, i.e., $a=|{\nabla} u|^{-1}$. As a convention, we use $D$ to denote the Levi-Civita connection defined by $g$ and we also use $\nabla$ to denote the induced connection on $\Sigma_t$. 

The corresponding Ricci coefficients for the connection $D$ are given by the following set of equations:
\begin{align*}
 \chi_{AB}&=g(D_{A}e_{3},e_{B}), \  2\xi_{A}=g(D_{4}e_{4},e_{A})=0, \  \eta_{A}= \frac{1}{2}g(D_{3}e_{4},e_{A}), \  \omega=\frac{1}{4}g(D_{4}e_{4},e_{3}),\\
	\underline{\chi}_{AB}&=g(D_{A}e_{4},e_{B}), \  2\underline{\xi}_{A}=g(D_{3}e_{3},e_{A}), \  \underline{\eta}_{A}= \frac{1}{2}g(D_{4}e_{3},e_{A}), \ \omega=\frac{1}{4}g(D_{3}e_{3},e_{4}).
\end{align*}
We also introduce the torsion $\zeta_A =\frac{1}{2}g(D_{A}e_{4},e_{3})$. We can write the connection $D$ in terms of the null frame $(e_\mu)$ and the Ricci coefficients:

\begin{align*}
D_{4}e_{4}&=-2\omega e_{4}, \ D_{3}e_{4}=2\eta_{A}e_{A}+2\underline{\omega}e_{4},\ D_{A}e_{4}=\chi_{AB}e_{B}-\zeta_{A}e_{4}, \\
D_{3}e_{3}&=2\underline{\xi}_{A}e_{A}-2\underline{\omega}e_{3},\	D_{4}e_{3}=2\underline{\eta}_{A}e_{A}+2\omega e_{3},\ D_{A}e_{3}=\underline{\chi}_{AB}e_{B}+\zeta_{A}e_{3},\\
D_{4}e_{A}&=\nabla_{4}e_{A}+\underline{\eta}_{A}e_{4}	+\xi_{A}e_{4}, \ D_{3}e_{A}=\nabla_{3}e_{A}+\eta_{A}e_{3}+\underline{\xi}_{A}e_{4},\\
D_{B}e_{A}&=\slashed{\nabla}_{B}e_{A}+\frac{1}{2}\chi_{AB}e_{3}+\frac{1}{2}\underline{\chi}_{AB}e_{4}.
\end{align*}

In the above formulas, the notations $\nabla_{4}e_{A}$ and $\nabla_{3}e_{A}$ are the projection of $D_4 e_{A}$ and $D_3 e_{A}$ to $S_{t,u}$, e.g., $\Pi(D_4 e_{A})$. Similarly, $\slashed{\nabla}_{B}e_{A}=\pi(D_{B}e_{A})$.
\begin{Remark}
The notations in the current work are different from that in \cite{Christodoulou-Klainerman}. The notations $H$, $\underline{H}$, $Y$, $\underline{Y}$, $Z$, $\underline{Z}$, $\Omega$, $\underline{\Omega}$ and $V$ are replaced by $\chi$, $\underline{\chi}$, $\xi$, $\underline{\xi}$, $\eta$, $\underline{\eta}$, $\omega$, $\underline{\omega}$ and $\zeta$ respectively.
\end{Remark}
We can also write the induced connection $\nabla$ on $\Sigma_{t}$ in terms of the frame $(N,e_1,e_2)$:
\begin{align*}
\nabla_{N}e_{A}&=\slashed{\nabla}e_{A}+a^{-1}(\slashed{\nabla}_{A}a)N, \  \nabla_{A}N=\theta_{AB}e_{B},\ \nabla_{N}N=-a^{-1}(\slashed{\nabla}_{A}a)e_{A},\ \nabla_{B}e_{A}=\slashed{\nabla}_{B}e_{A}-\theta_{AB}N.
\end{align*}
We now relate the spacetime Ricci coefficients to those defined on $\Sigma_t$. We first decompose the second fundamental form $k$ into
\[
	\Xi_{AB}=k_{AB},\ \epsilon_{A}=k_{AN},\ \delta=k_{NN}.
\]
Therefore, if we use $\phi$ to denote the lapse function of the time foliation $\Sigma_t$, i.e., $\phi = \left(-g^{\mu\nu}D_\mu t D_\nu t\right)^{-\frac{1}{2}}$ or equivalently $T= -\phi Dt$,  we have 
\begin{align*}
\chi_{AB}&=\theta_{AB}-\Xi_{AB}, \ \eta_{A}=a^{-1}\nabla_{A}a+\epsilon_{A},\ \omega=\frac{1}{2}(-\phi^{-1}\nabla_{N}\phi+\delta),\\
\underline{\chi}_{AB}&=-\theta_{AB}-\Xi_{AB}, \ \underline{\eta}_{A}=\phi^{-1}\nabla_{A}\phi-\epsilon_{A},\ \underline{\omega}=\frac{1}{2}(\phi^{-1}\nabla_{N}\phi+\delta),
\end{align*}
and

\[\underline{\xi}_{A}=\phi^{-1}\nabla_{A}\phi-a^{-1}\nabla_{A}a,\ \zeta_{A}=\epsilon_{A}.\]

We also use $r$ to the radius of the sphere $S_{t,u}$, i.e., $4\pi r^2$ is equal to the area of $S_{t,u}$. Then, we have 
\[\nabla_{N}r=\frac{r}{2a}\overline{a {\rm tr}\theta}, \ D_{T}r=\phi^{-1}\frac{r}{2}\overline{\phi {\rm tr}\chi},\]
where the notation $\overline{f}$ denotes the mean of the function $f$ on $S_{t,u}$, i.e., $\overline{f}:=\frac{1}{4\pi r^{2}}\int_{S_{t,u}}f d\mu$ with $d\mu$ the volume form of $S_{t,u}$.


Finally, we decompose the curvature tensor in terms of the null frame $(e_\mu)$. Indeed, we can decompose any Weyl type tensor $W$ as follows:
\begin{align*}
\alpha(W)_{AB}&= W(e_A,e_4,e_B,e_4),\ \beta(W)_{A}= \frac{1}{2}W(e_A,e_4,e_3,e_4), \ \rho(W)_{A}=  \frac{1}{4}W(e_3,e_4,e_3,e_4),\\
\underline{\alpha}(W)_{AB}&= W(e_A,e_3,e_B,e_3),\ \underline{\beta}(W)_{A}=  \frac{1}{2}W(e_A,e_3,e_3,e_4), \ \rho(W)_{A}= \frac{1}{4} \,^*W(e_3,e_4,e_3,e_4).
\end{align*}
where $^*W$ is the Hodge dual of $W$. In particular, $\alpha$ and $\underline{\alpha}$ are symmetric traceless $2$-tensors on $S_{t,u}$, $\beta$ and $\underline{\beta}$ are $1$-forms on $S_{t,u}$.  The above $6$ tensors completely determine the Weyl type tensor $W$ in the following way:
\begin{align*}
	W_{A3B3}&=\underline{\alpha}_{AB},W_{A4B4}=\alpha_{AB},  W_{A334}=2\underline{\beta}_{A},W_{A434}=\beta_{A}, W_{3434}=\rho,W_{AB34}=2\sigma\epsilon_{AB},\\
	W_{ABC3}&=\epsilon_{AB}\,^*\underline{\beta}_{C},W_{ABC4}=-\epsilon_{AB}\,^*\beta_{C},\ W_{A3B4}=-\rho\delta_{AB}+\sigma\epsilon_{AB},W_{ABCD}=-\epsilon_{AB}\epsilon_{CD}\rho.
\end{align*}
In the above expressions, $\,^*\alpha$, $\,^*\underline{\alpha}$,$\,^*\beta$ and $\,^*\underline{\beta}$ are the Hodge duals of the corresponding tensors relative to the metric on $S_{t,u}$. We use $\epsilon_{AB}$ to denote the volume $2$-form of $S_{t,u}$.

The Ricci coefficients and the curvature components are related by the null structure equations and null Bianchi equations. We refer to Chapter 7 of \cite{Christodoulou-Klainerman} to the whole system of equations. We only list two of them which appear in the rest of the paper: 

\begin{equation}\label{eq: null structure: 4 chibh}
\slashed{\nabla}_{N}\widehat{\Xi}+\frac{1}{2}{\rm tr}\theta\widehat{\Xi}=\frac{1}{4}(-\underline{\alpha}+\alpha)+\frac{1}{2}\slashed{\nabla}\widehat{\otimes}\zeta+\frac{3}{2}\delta\widehat{\theta}+(a^{-1}\nabla a)\widehat{\otimes}\zeta.\end{equation}


\begin{equation}\label{eq: null bianchi: 4 alphab}
 	\slashed{\nabla}_{4}\underline{\alpha}+\frac{1}{2}{\rm tr}\chi\underline{\alpha}=-\slashed{\nabla}\widehat{\otimes}\underline{\beta}+4\omega\underline{\alpha}-3(\underline{\widehat{\chi}}\rho-\,^*\underline{\widehat{\chi}}\sigma)+(\zeta-4\underline{\eta})\widehat{\otimes}\underline{\beta}.
\end{equation}

In the above expressions, ${\rm tr}$ means the trace part of the tensor (with respect to the induced metric on $S_{t,u}$) and ~$\widehat{ }$~ means the traceless part of the tensor.

\subsection{The Bel-Robinson tensors and energy identities}
Given a Weyl field $W$, the associated Bel-Robinson tensor $Q(W)$ is defined as 
\begin{align}
	Q(W)_{\alpha\beta\gamma\delta}=W_{\alpha\rho\gamma\sigma}W_{\beta}{}^{\rho}{}_{\delta}{}^{\sigma}+\,^*W_{\alpha\rho\gamma\delta}\,^*W_{\beta}{}^{\rho}{}_{\delta}{}^{\sigma}.
\end{align}
This is a symmetric and traceless $4$-tensor. It is also \emph{positive} in the following sense: for any causal future directed vector fields $X_1,\cdots, X_4$, we have
\[Q(W)(X_{1},X_{2},X_{3},X_{4})\geq 0.\]

If $W=R$ is the Weyl tensor (for a vacuum spacetime), $Q(W)$ is divergence free, i.e.,
\[D^{\alpha}Q(R)_{\alpha\beta\gamma\delta}=0.\]
We refer to Chapter 7 of \cite{Christodoulou-Klainerman} to the proof of the above properties. In the rest of paper, we will only use $Q(R)$ and we denote it by $Q$.

One of the main applications of the Bel-Robinson tensors is to derive energy estimates for curvature tensors. Given $3$ causal future directed vector fields $X$,$Y$ and $Z$, we consider the following current  $1$-form 
\[P_{\alpha}=Q_{\alpha\beta\gamma\delta}X^{\beta}Y^{\gamma}Z^{\delta}.\]
Therefore,
\[
	D^{\alpha}P_{\alpha}=\frac{1}{2}Q^{\alpha\beta\gamma\delta}({}^{(X)}\pi_{\alpha\beta}Y_{\gamma}Z_{\delta}+{}^{(Y)}\pi_{\alpha\beta}X_{\gamma}Z_{\delta}+{}^{(Z)}\pi_{\alpha\beta}X_{\gamma}Y_{\delta}),
\]
where ${}^{(X)}\pi_{\alpha\beta}=\mathcal{L}_X g$ is the corresponding deformation tensor. We can integrate this identity on the region foliated by $\Sigma_{t}$ where $t\in [t_{1},t_{2}]$. According to the Stokes theorem, we have
 \begin{equation}\label{basic energy estimates}
 \begin{split}
 &\int_{\Sigma_{t_{2}}}Q(W)(X,Y,Z,T)-\int_{\Sigma_{t_{1}}}Q(W)(X,Y,Z,T)\\
 =&
 	\frac{1}{2}\int_{t_{1}}^{t_{2}}\phi \left(\int_{\Sigma_{t}}Q^{\alpha\beta\gamma\delta}\,{}^{(X)}\pi_{\alpha\beta}Y_{\gamma}Z_{\delta}+Q^{\alpha\beta\gamma\delta}\,{}^{(Y)}\pi_{\alpha\beta}X_{\gamma}Z_{\delta}+Q^{\alpha\beta\gamma\delta}\,{}^{(Z)}\pi_{\alpha\beta}X_{\gamma}Y_{\delta}\right)dt.
	\end{split}
\end{equation}
 In the above expressions, $\int_{\Sigma_t}$ means the integration on $\Sigma_t$ with respect to the induced metric.
 
 We will take $X$, $Y$ and $Z$ to be one the following vectors:
 \[T=\frac{1}{2}(e_{3}+e_{4}), \ \ K=\frac{1}{2}\left(u^{2}e_{3}+(2r-u)^{2}e_{4}\right).\]
The null components of $\,^{(T)}\pi$ are listed as follows:
\begin{equation}\label{deformation: T}
\begin{split}
	&{}^{(T)}\pi_{44}=-2\omega,\  \ {}^{(T)}\pi_{34}=2\delta,\ \ {}^{(T)}\pi_{33}=-2\underline{\omega},\ {}^{(T)}\pi_{AB}=-2\Xi_{AB},\\&
	{}^{(T)}\pi_{A3}=2\epsilon_{A}+\phi^{-1}\slashed{\nabla}_{A}\phi,\ \ {}^{(T)}\pi_{A4}=-2\epsilon_{A}+\phi^{-1}\slashed{\nabla}_{A}\phi.
\end{split}
\end{equation}
The null components of $\,^{(K)}\pi$ are listed as follows:
\begin{equation}\label{deformation: K}
\begin{split}	
^{(K)}\widehat{\pi}_{44}&=-4u^{2}\omega,\ ^{(K)}\widehat{\pi}_{4A}=u^{2}(\underline{\eta}-\zeta_{A}), \ ^{(K)}\widehat{\pi}_{AB}=u^{2}\widehat{\underline{\chi}}_{AB}+(2r-u)^{2}\widehat{\chi}_{AB}+\frac{1}{2}^{(K)}\widehat{\pi}_{34}\delta_{AB}\\
		^{(K)}\widehat{\pi}_{33}&=-8(2r-u)(D_{3}r+a^{-1})-4(2r-u)^{2}\underline{\omega},\ ^{(K)}\widehat{\pi}_{3A}=u^{2}\underline{\xi}_{A}+(2r-u)^{2}(\eta_{A}+\zeta_{A}),\\
	^{(K)}\widehat{\pi}_{34}&=2u\left(\frac{1}{a}-1\right)-2(2r-u)(D_{4}r-1)+\frac{1}{2}u^{2}\left({\rm tr}\underline{\chi}+\frac{2}{r}\right)+\frac{1}{2}(2r-u)^{2}\left({\rm tr}\chi-\frac{2}{r}\right)\\
	& \ \ +u^{2}\underline{\omega}+(2r-u)^{2}\omega,
\end{split}
\end{equation}
where
\begin{align*}
	D_{4}r=\phi^{-1}\frac{r}{2}\overline{\phi {\rm tr}\chi},\ \ D_{3}r=a^{-1}\frac{r}{2}\overline{a{\rm tr}\underline{\chi}}+\frac{r}{2}\left(\phi^{-1}\overline{\phi {\rm tr}\chi}-a^{-1}\overline{a{\rm tr}\chi}\right)
\end{align*}
We will take $(X,Y, Z) =(T,T,T)$ or $(T,T,\overline{K})$ where $\overline{K}=K+T$. Thus, we compute
\begin{equation}\label{eq: computation for Q}
\begin{split}
	Q(W)(T,T,T,T)&=|\alpha|^{2}+|\underline{\alpha}|^{2}+|\beta|^{2}+|\underline{\beta}|^{2}+|\rho|^{2}+|\sigma|^{2},\\ 
	Q(W)(\overline{K},T,T,T)&=\tau_{-}^{2}|\underline{\alpha}|^{2}+\tau_{+}^{2}(|\alpha|^{2}+|\beta|^{2}+|\underline{\beta}|^{2}+|\rho|^{2}+|\sigma|^{2})
\end{split}
\end{equation}
where $\tau_{-}^{2}=1+u^{2}$ and $\tau_{+}^{2}=1+(2r-u)^{2}$. In particular, we have the following estimate:
\begin{equation}\label{eq: bound Q by conformal Q}
	Q(W)(T,T,T,T)\leq \tau_{-}^{2}Q(W)(\overline{K},T,T,T).
\end{equation}

\subsection{Some quantitative results from \cite{Christodoulou-Klainerman}}
The work of \cite{Christodoulou-Klainerman} not only proves the global nonlinear stability of Minkowski space but also gives precise asymptotics for various geometric quantities. We refer to Chapter 10 of \cite{Christodoulou-Klainerman} for the results in this subsection.

Let $C_{0}$ be the future complete null cone with vertex at a fixed point on $\Sigma_{-1}$. For $t\geq 0$, we use $r_{0}(t)$ to represent the radius of $\Sigma_{t}\cap C_0$. The function $r_{0}(t)$ and $t$ are comparable in the following sense:
\begin{equation}\label{est:r0}
	\frac{1}{2}(1+t)\leq r_{0}(t)\leq\frac{3}{2}(1+t).
\end{equation}

The exterior region $\Sigma_t^e$ in $\Sigma_{t}$ is defined to be the region $\big\{p\in \Sigma_{t}\big|r(p)\geq \frac{1}{2}r_{0}(t)\big\}$ and the interior region $\Sigma_i^e$ in $\Sigma_{t}$ is defined to be the region $\big\{p\in \Sigma_{t}\big|r(p)\leq \frac{1}{2}r_{0}(t) \big\}$.  In \cite{Christodoulou-Klainerman}, one constructs a global smooth exterior optical function $u$, namely a solution of the Eikonal equation defined everywhere in the exterior region $r\geq\frac{r_{0}}{2}$.

For any $S_{t,u}$-tangential tensor fields $V$, we introduce the following norms defined in the interior region $\Sigma_{t}^{i}$ and exterior region $\Sigma_{t}^{e}$:
\[\|V\|_{p,i}=\|V\|_{L^{p}(\Sigma_t^i)}, \ \ \|V\|_{p,e}=\|V\|_{L^{p}(\Sigma_t^e)}.\]
where $p=2$ or $p=\infty$.

The lapse function $\phi$ satisfies the following estimates:
\begin{equation}\label{est:lapse}
	r_{0}\|\phi-1 \|_{\infty,i}\lesssim\varepsilon_{0}, \ \ \|r(\phi-1)\|_{\infty,e}\lesssim\varepsilon_{0} , \ \ r_{0}^{-\frac{1}{2}}||r^{\frac{5}{2}}\slashed{\nabla}\phi||_{\infty,e}\lesssim\varepsilon_{0}, \ \ r_{0}^{\frac{1}{2}}\|r^{\frac{3}{2}}\slashed{\nabla}_{N}\phi\|_{\infty,e}\lesssim\varepsilon_{0}, \ \ r_{0}^{2}\|D\phi\|_{\infty,i}\lesssim\varepsilon_{0} .
\end{equation}

The curvature tensor satisfies the following interior estimates
\begin{equation} \label{est: curvature int}
	r_0^{\frac{7}{2}}\|\left(\alpha,  \underline{\alpha},  \beta, \underline{\beta}, \rho, \sigma\right)\|_{\infty,i}\lesssim\varepsilon_0,
\end{equation}
and exterior estimates
\begin{equation}\label{est: curvature}
	\|\left(r^{\frac{7}{2}}\alpha, r\tau_{-}^{\frac{5}{2}}\underline{\alpha}, r^{\frac{7}{2}}\beta, r^{2}\tau_{-}^{\frac{3}{2}}\underline{\beta},r^{3}\rho,r^{3}\sigma\right)\|_{\infty,e}+\|r^{\frac{9}{2}}\slashed{\nabla}\beta\|_{\infty,e} \lesssim\varepsilon_0.
\end{equation}

The second fundamental form $k$ satisfies the following interior estimates:
\begin{equation}\label{est: second fundamental form interior}
	r_{0}^{2}\|k\|_{\infty,i}+r_{0}^{3}\|\slashed{\nabla}k\|_{\infty,i}\lesssim\epsilon_{0}
\end{equation}
and exterior estimates
\begin{equation}\label{est: second fundamental form exterior}
	r_{0}^{-\frac{1}{2}}\|\left(r^{\frac{5}{2}}\delta,r^{\frac{5}{2}}\epsilon,\min\{\tau_{-}^{\frac{3}{2}}r_{0}^{\frac{1}{2}},r^{\frac{3}{2}}\}r\widehat{\Xi} \right)\|_{\infty,e}+r_{0}^{-\frac{1}{2}}\|\left(r^{\frac{7}{2}}\slashed{\nabla}\delta,r^{\frac{7}{2}}\slashed{\nabla}\epsilon\right)\|_{\infty,e}\lesssim\epsilon_{0}
\end{equation}

For the connection coefficients, we have the following interior estimates:
\begin{equation}\label{est: Hessian int}
	r_{0}^{\frac{3}{2}}\|\left({\rm tr}\chi-\overline{{\rm tr}\chi},(\overline{{\rm tr}\chi}-\frac{2}{r} ),\widehat{\chi},\eta\right)\|_{i,\infty}\lesssim\varepsilon_0
\end{equation}
and exterior estimates
\begin{equation}\label{est: Hessian ext}
	\|\left( r^{2}({\rm tr}\underline{\chi}+\frac{2}{r}),r^{2}({\rm tr}\chi-\frac{2}{r}),r^{2}\widehat{\chi},r^{2}\widehat{\underline{\chi}},r^{2}\eta,r^{2}\underline{\eta},r^{2}\omega,r^{2}\underline{\omega},r^{2}\zeta \right)\|_{e,\infty}\lesssim\varepsilon_0
\end{equation}
For $a=\frac{1}{|\nabla u|}$,we have the following estimates:
\begin{equation}\label{est: a}
	\|r(a-1)\|_{e,\infty}\lesssim\varepsilon_0, \ r_{0}^{\frac{1}{2}}\|a-1\|_{i,\infty}\lesssim\varepsilon_{0}
\end{equation}

In view of \eqref{deformation: T} and \eqref{deformation: K},by a universal constant, the above estimates imply that  the trace free parts of the deformation tensor of $T$ and $\overline{K}$ satisfy the following estimates
\begin{equation}\label{est: deformation tensor T Kb 1}
	r_{0}^{\frac{3}{2}}\|{}^{(T)}\widehat{\pi}\|_{\infty,i}+r_{0}^{-\frac{1}{2}}\|{}^{(\overline{K})}\widehat{\pi}\|_{\infty,i}\lesssim\varepsilon_0,
\end{equation}
\begin{equation}\label{est: deformation tensor T Kb 2}
		\|r\tau_{-}{}^{(T)}\widehat{\pi}_{AB}\|_{\infty,e}+\|r^{2}\left({}^{(T)}\widehat{\pi}_{34},{}^{(T)}\widehat{\pi}_{4A},{}^{(T)}\widehat{\pi}_{3A},{}^{(T)}\widehat{\pi}_{44},{}^{(T)}\widehat{\pi}_{33}\right)\|_{\infty,e}\lesssim\varepsilon_0,
\end{equation}		
\begin{equation}\label{est: deformation tensor T Kb 3}
		\|\left({}^{(\overline{K})}\widehat{\pi}_{AB},{}^{(\overline{K})}\widehat{\pi}_{34},{}^{(\overline{K})}\widehat{\pi}_{33},{}^{(\overline{K})}\widehat{\pi}_{3A}\right)\|_{\infty,e}+
		\|r^{2}\tau_{-}^{-2}({}^{(\overline{K})}\widehat{\pi}_{4A},{}^{(\overline{K})}\widehat{\pi}_{44})\|_{\infty,e}\lesssim\varepsilon_0.
\end{equation}

\section{The main theorem}
We consider the null Bianchi equation \eqref{eq: null bianchi: 4 alphab} for $\underline{\alpha}$ along the null cone $C_{u}$. In Chapter 17 of \cite{Christodoulou-Klainerman}, one shows that \eqref{eq: null bianchi: 4 alphab} implies the radiation field $\underline{\mathbf{A}}$ can be defined as follows on the future null infintiy:
\begin{equation}\label{def:Ab}
\underline{\mathbf{A}}:=\lim_{t\rightarrow 0, \atop \\ {\text on}~C_u}r\underline{\alpha},
\end{equation}
where the limit is taken along each null geodesic along a fixed light cone $C_u$. The radiation field $\underline{\mathbf{A}}$ can be interpreted as the gravitation waves detected from a far-away observer (at future null infinity). The main theorem of the paper will verify the following physical intuition: if no gravitational waves are detected by the far-away observers, then there are no gravitational waves at all, i.e., the spacetime is flat hence isometric to the Minkowski spacetime. We refer to \cite{Li-Yu} for a similar result for the Alfv\'en waves.
\begin{Theorem}
For sufficiently small $\varepsilon_0$, if the $\underline{\mathbf{A}}$ vanishes, then $M$ is isometric to the Minkowski spacetime.
\end{Theorem}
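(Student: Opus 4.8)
The plan is to show that the vanishing of $\underline{\mathbf{A}}$ forces $\underline{\alpha}$ to vanish identically, and then to propagate this information through the Bianchi system and the Bel-Robinson energy identities until every curvature component is seen to vanish. The starting observation is that along each outgoing cone $C_u$, equation \eqref{eq: null bianchi: 4 alphab} is a linear transport equation for $r\underline{\alpha}$ with a right-hand side that is quadratic in (already controlled) Ricci coefficients and curvature; integrating this transport equation from future null infinity inward, with the terminal value $\underline{\mathbf{A}}=0$, one expects to conclude that $r\underline{\alpha}\equiv 0$ on the whole exterior region, hence $\underline{\alpha}\equiv 0$ there. The decay rates recorded in \eqref{est: curvature} and \eqref{est: Hessian ext} are exactly what is needed to justify that the integral of the error terms converges and vanishes in the limit; the smallness of $\varepsilon_0$ is used so that the transport estimate closes (the coefficient $\tfrac12{\rm tr}\chi$ has a good sign up to $O(\varepsilon_0)$ corrections).

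\textbf{From $\underline{\alpha}=0$ to flatness.} Once $\underline{\alpha}\equiv 0$ in the exterior, I would feed this into the remaining null Bianchi equations. Schematically, the Bianchi equation $\slashed{\nabla}_4\underline{\beta} + \tfrac12{\rm tr}\chi\,\underline{\beta} = -\slashed{\nabla}\cdot\underline{\alpha} + \cdots$ (see Chapter 7 of \cite{Christodoulou-Klainerman}) shows that $\underline{\beta}$ satisfies a transport equation whose source is now built only from $\underline{\alpha}=0$ and lower-order terms, forcing $\underline{\beta}\equiv 0$; continuing down the hierarchy, $(\rho,\sigma)$, then $\beta$, then $\alpha$ are controlled in turn. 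The cleanest way to package this cascade is through the energy identity \eqref{basic energy estimates} with $(X,Y,Z)=(T,T,\overline{K})$: using \eqref{eq: computation for Q} the flux on $\Sigma_t$ controls $\tau_-^2|\underline{\alpha}|^2 + \tau_+^2(|\alpha|^2+|\beta|^2+|\underline{\beta}|^2+|\rho|^2+|\sigma|^2)$, and the bulk error term is, by \eqref{est: deformation tensor T Kb 1}--\eqref{est: deformation tensor T Kb 3}, bounded by $\varepsilon_0$ times the same weighted curvature norm integrated in $t$. A Gronwall argument then shows that the curvature flux is constant in $t$; letting $t\to\infty$ and using that the flux through null infinity is governed by $\underline{\mathbf{A}}$ (and the analogous radiation field for $\alpha$, which decays) gives that the flux is zero for all $t$, hence the full curvature tensor $R$ vanishes on $M$. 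A Ricci-flat spacetime with vanishing Weyl tensor is flat, and a geodesically complete flat spacetime that is globally asymptotically Minkowskian is isometric to Minkowski space.

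\textbf{Main obstacle.} The delicate point is the first step: extracting $\underline{\alpha}\equiv 0$ from $\underline{\mathbf{A}}=0$. The transport equation \eqref{eq: null bianchi: 4 alphab} has the borderline weight $r^{-1}$ built into it, and the source term $-3(\widehat{\underline{\chi}}\rho - {}^*\widehat{\underline{\chi}}\sigma)$ decays only like $r^{-5}$, which is integrable but just barely; one must check that the inward integration from $\mathcal{I}^+$ does not pick up a nonzero contribution, and this requires knowing that $r\underline{\alpha}$ genuinely has a limit (not merely that it is bounded) — precisely the content of Chapter 17 of \cite{Christodoulou-Klainerman}. A secondary subtlety is that the whole argument as described lives in the exterior region $\Sigma_t^e$ where the optical function $u$ is defined; to conclude flatness of all of $M$ one must then transport the vanishing of $R$ from the exterior into the interior, which can be done either by the energy identity on the full slices $\Sigma_t$ (the interior deformation-tensor bounds \eqref{est: deformation tensor T Kb 1} are available) or by a unique-continuation argument for the Bianchi system across the timelike boundary $r=\tfrac12 r_0$. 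I expect the bookkeeping of these weighted estimates, rather than any single conceptual step, to be where the real work lies.
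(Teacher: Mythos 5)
The central step of your argument --- that integrating the transport equation \eqref{eq: null bianchi: 4 alphab} inward from null infinity with terminal value $\underline{\mathbf{A}}=0$ forces $\underline{\alpha}\equiv 0$ --- does not hold. That equation is inhomogeneous: the source terms $-\slashed{\nabla}\widehat{\otimes}\underline{\beta}$, $-3(\widehat{\underline{\chi}}\rho-{}^*\widehat{\underline{\chi}}\sigma)$, etc.\ are generically nonzero under the hypothesis, so integrating from $\mathcal{I}^+$ yields $|r\underline{\alpha}|(p)=\bigl|\int_0^\infty \tfrac{d}{ds}(|r\underline{\alpha}|)\,ds\bigr|$, which is bounded by the integral of the sources but has no reason to vanish. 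You flag exactly this issue in your ``main obstacle'' paragraph but never resolve it, and it cannot be resolved: the correct conclusion is only an \emph{improved decay rate}, namely $|\underline{\alpha}|\lesssim\varepsilon_0\,\tau_+^{-3/2}\tau_-^{-2}$ near the light cone (a gain over the a priori bound $r^{-1}\tau_-^{-5/2}$ of \eqref{est: curvature}), obtained by weighting the transport equation with $\tau_-^2$ and integrating backward from infinity. Since $\underline{\alpha}$ does not vanish pointwise, the Bianchi cascade you describe ($\underline{\alpha}=0\Rightarrow\underline{\beta}=0\Rightarrow\dots$) has no starting point, and in any case each subsequent Bianchi equation carries its own nonvanishing quadratic sources.

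The mechanism that actually closes the argument is different and you are missing its key intermediate step. The improved decay of $\underline{\alpha}$ is first converted, via the structure equation \eqref{eq: null structure: 4 chibh} integrated from spatial infinity along $N$, into improved decay of the traceless second fundamental form: $|\widehat{\Xi}|\lesssim\varepsilon_0\tau_+^{-3/2}$ near the light cone, hence $|{}^{(T)}\widehat{\pi}_{AB}|\lesssim\varepsilon_0\tau_+^{-3/2}$. This is what makes the error integral in the $(T,T,T)$ energy identity \eqref{basic energy estimates} integrable in time, giving the almost-conservation law $|E(t_2)-E(t_1)|\lesssim\varepsilon_0\sup_s E(s)$; without the improvement, ${}^{(T)}\widehat{\pi}_{AB}\sim r^{-1}\tau_-^{-1}$ produces a logarithmically divergent time integral. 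One then shows separately that $\lim_{t\to\infty}E(t)=0$ --- not by a flux-through-scri argument, but by splitting $\Sigma_t$ into the region away from the light cone (controlled by the conformal energy $\int Q(\overline{K},T,T,T)\lesssim\varepsilon_0^2+\varepsilon_0^3\log(1+t)$ together with the weight \eqref{eq: bound Q by conformal Q}) and the region near the light cone (controlled by the improved pointwise bound on $\underline{\alpha}$ and a volume estimate). Finally, choosing $t_0$ with $E(t_0)\geq\frac12\sup_t E(t)$ and letting $t\to\infty$ in the almost-conservation law gives $\frac12 E_*\lesssim\varepsilon_0 E_*$, hence $E_*=0$. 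Your ``Gronwall shows the flux is constant'' step conflates boundedness with conservation and skips all of this; as written, the proposal does not constitute a proof.
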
 
\begin{Remark}
 The theorem is analogous to the rigidity part of the positive mass theorem proved by Schoen and Yau in \cite{Schoen-Yau1} and  \cite{Schoen-Yau2}. In that case, if the ADM mass $m$, which is defined at spatial infinity, vanishes, one can also conclude that the spacetime is flat.
 
Indeed, we can also use the positive mass theorem to prove the main theorem of the paper, see Remark \ref{remark: using positive mass theorem} at the end of the paper.
\end{Remark}
In the current work, since we only deal with the case that $\underline{\mathbf{A}}\equiv 0$, it suffices to consider the limit of $|r\underline{\alpha}|$ at future null infinity. We can take the inner product with $r^2\underline{\alpha}$ for both sides of  \eqref{eq: null bianchi: 4 alphab} to derive a transport equation for $r^2|\underline{\alpha}|^2$:
\begin{equation}\label{eq: nabla 4 of r2alphab2}
\begin{split}
 e_{4}(r^{2}|\underline{\alpha}|^{2})&=ar^{2}|\underline{\alpha}|^{2}\left(\overline{a^{-1}{\rm tr}\chi}-a^{-1}{\rm tr}\chi\right)-2r^{2}\underline{\alpha}^{AB} \left(\slashed{\nabla}\widehat{\otimes}\underline{\beta}\right)_{AB}+8r^{2}\omega|\underline{\alpha}|^{2}\\
 &\ \ \ -6r^{2}\underline{\alpha}^{AB}(\widehat{\underline{\chi}}_{AB}\rho-\,^*\widehat{\underline{\chi}}_{AB}\sigma)+2r^{2}\underline{\alpha}^{AB}\left[(\zeta-4\eta)\widehat{\otimes}\underline{\beta}\right]_{AB}.
 \end{split}
 \end{equation}
 We recall that $L=-Du$ is related to $e_4$ by $L=a^{-1}e_4$. For a given null geodesic generated by $e_4$ or $L$, we take $s$ to be the affine parameter for $L$. Hence,
 \begin{align*}
 	\frac{\partial r}{\partial s}=\frac{r}{2}\overline{a^{-1}{\rm tr}\chi}.
 \end{align*}
 Therefore, along the null geodesic, we can rewrite \eqref{eq: nabla 4 of r2alphab2} as
\begin{equation}\label{eq: L of r2alphab2}
\begin{split}
 \frac{d}{ds}(r^{2}|\underline{\alpha}|^{2})&=r^{2}|\underline{\alpha}|^{2}\left(\overline{a^{-1}{\rm tr}\chi}-a^{-1}{\rm tr}\chi\right)-2a^{-1}r^{2}\underline{\alpha}^{AB} \left(\slashed{\nabla}\widehat{\otimes}\underline{\beta}\right)_{AB}+8a^{-1}r^{2}\omega|\underline{\alpha}|^{2}\\
 &\ \ \ -6a^{-1}r^{2}\underline{\alpha}^{AB}(\widehat{\underline{\chi}}_{AB}\rho-\,^*\widehat{\underline{\chi}}_{AB}\sigma)+2a^{-1}r^{2}\underline{\alpha}^{AB}\left[(\xi-4\eta)\widehat{\otimes}\underline{\beta}\right]_{AB}.
 \end{split}
 \end{equation}
We can apply the bounds \eqref{est: curvature} and \eqref{est: Hessian ext} to the righthand side. This leads to 
\[ \frac{d}{ds}(r^{2}|\underline{\alpha}|^{2})\lesssim \frac{\varepsilon_0}{(1+r)^{\frac{3}{2}}} r |\underline{\alpha}|.\]
Hence,
\[\frac{d}{ds}(r|\underline{\alpha}|)\lesssim \frac{\varepsilon_0}{(1+r)^{\frac{3}{2}}}.\]
The right hand side is integrable for $r\in [0,\infty)$.
Therefore, for any point $p\in M$, we can integrate $\frac{d}{ds}(r|\underline{\alpha}|)$ along the outgoing null geodesic emanating from the point $p$. Therefore, we have
\begin{equation}\label{eq: vanishing condition}
	0=|r\underline{\alpha}|(p)+\int_{0}^{\infty} \frac{d}{d s}\left(|r\underline{\alpha}|\right)ds.
\end{equation}

\section{Proof of the main theorem}
We start with an estimate on the volume of annular regions near light cone.
\begin{Lemma}\label{cor:volume}
	For any $t\geq 0$ and $r_{1}, r_{2} \in \mathbb{R}$ with $\frac{1}{2}r_{0}(t)\leq r_{1}< r_{2}$, for sufficiently small $\varepsilon_0$, the volume of annulus $\big\{p\in \Sigma_{t}\big| r_{1}\leq r(p)\leq r_{2} \big\}$ is bounded above by $C(r_{2}^{3}-r_{1}^{3})$ where $C$ is a universal constant.\end{Lemma}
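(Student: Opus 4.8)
The plan is to compute the volume of the annulus $A=\{p\in\Sigma_t:r_1\le r(p)\le r_2\}$ by the coarea formula for the area-radius function $r$, controlling everything through the closeness of $|\nabla r|$ to $1$. Since $r_1\ge\frac{1}{2}r_0(t)$, the whole annulus lies in the exterior region $\Sigma_t^e=\{r\ge\frac{1}{2}r_0(t)\}$, so the exterior estimates of \cite{Christodoulou-Klainerman} quoted above are available on all of $A$.

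The first step is to show that $|\nabla r|=1+O(\varepsilon_0 r^{-1})$ on $\Sigma_t^e$. By construction $r$ is constant on each sphere $S_{t,u}$, so $\nabla r$ is parallel to the outward unit normal $N$ and $|\nabla r|=\nabla_N r$. The first-variation identity recalled in the preliminaries gives $\nabla_N r=\frac{r}{2a}\,\overline{a\,{\rm tr}\theta}$. Using $\chi=\theta-\Xi$ and $\underline{\chi}=-\theta-\Xi$ to write ${\rm tr}\theta=\frac{1}{2}({\rm tr}\chi-{\rm tr}\underline{\chi})$, and inserting the bounds \eqref{est: Hessian ext} for ${\rm tr}\chi-\frac{2}{r}$ and ${\rm tr}\underline{\chi}+\frac{2}{r}$ together with \eqref{est: a} for $a-1$, one obtains ${\rm tr}\theta=\frac{2}{r}+O(\varepsilon_0 r^{-2})$ and hence $\nabla_N r=1+O(\varepsilon_0 r^{-1})$; since $r$ is constant on $S_{t,u}$, the spherical average appearing in the formula does not spoil these error terms. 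Because $r\ge r_1\ge\frac{1}{2}r_0(t)$, which is bounded below by a universal constant by \eqref{est:r0}, taking $\varepsilon_0$ small gives $\frac{1}{2}\le|\nabla r|\le 2$ throughout $\Sigma_t^e$. In particular $r$ has no critical points there, so its level sets are smooth surfaces; being unions of the spheres $S_{t,u}$ on which $r$ is constant, with $r$ strictly monotone in the $N$ direction because $\nabla_N r>0$, each level set $\{r=s\}$ equals a single sphere $S_{t,u(s)}$, whose area is $4\pi s^2$ by the definition of $r$.

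The second step is to apply the coarea formula to $r$ on $A$ and estimate:
\[
{\rm Vol}(A)=\int_{r_1}^{r_2}\Big(\int_{\{r=s\}}\frac{d\mu}{|\nabla r|}\Big)ds\le\int_{r_1}^{r_2}2\cdot{\rm Area}(S_{t,u(s)})\,ds=8\pi\int_{r_1}^{r_2}s^2\,ds=\frac{8\pi}{3}(r_2^3-r_1^3),
\]
which is the claimed bound with $C=\frac{8\pi}{3}$.

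There is no substantial obstacle here: the content is just the $O(\varepsilon_0)$ perturbation estimate for $\nabla_N r$, which is immediate from the quoted asymptotics. The only points that deserve a line of justification are that the level sets of $r$ in the exterior region coincide with the foliation spheres — so that ${\rm Area}=4\pi r^2$ may be used — and that the spherical averaging in the formula for $\nabla_N r$ preserves the error bounds; both follow from $r$ being constant on each $S_{t,u}$, together with $\nabla_N r>0$.
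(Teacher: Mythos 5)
Your proof is correct and follows essentially the same route as the paper's: both foliate the annulus by the spheres $S_{t,u}$, use ${\rm Area}(S_{t,u})=4\pi r^2$, and reduce to $\int_{r_1}^{r_2}4\pi r^2\,dr$ via the fact that the transverse derivative of $r$ (equivalently $\nabla_N r=\frac{r}{2a}\overline{a\,{\rm tr}\theta}$) is $1+O(\varepsilon_0 r^{-1})$ by \eqref{est: Hessian ext} and \eqref{est: a}. Your version via the coarea formula is, if anything, slightly more explicit than the paper's about the change of variables from $u$ to $r$ and about why the spherical average causes no loss.
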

\begin{proof}
The annular region $\big\{p\in \Sigma_{t}\big| r_{1}\leq r(p)\leq r_{2} \big\}$ is diffeomorphic to $ [u_{1},u_{2}]\times \mathbf{S}^2$. In terms of local coordinates $(u,\vartheta_1,\vartheta_2)\in [u_{1},u_{2}]\times \mathbf{S}^2$, the metric $g$ restricted to the annular region takes the following form
\[a^{2}du^{2}+\sum_{A,B=1,2}\slashed{g}_{AB}d\vartheta^{A}d\vartheta^{B}.\]
The volume of the region is given by
\[\int_{u_{1}}^{u_{2}}\left(\int_{\mathbf{S}^{2}} a\sqrt{\det\slashed{g}}d\vartheta^{A}d\vartheta^{B}\right).\]
In view of \eqref{est: Hessian ext} and \eqref{est: a}, it is bounded by
\[\int_{u_{1}}^{u_{2}} {\rm Area}(S_{t,u})du=\int_{r_{1}}^{r_{2}} 4\pi r^{2}\frac{r}{2}\overline{atr\theta}dr\lesssim\int_{r_{1}}^{r_{2}}r^{2}dr=\frac{1}{3}(r_{2}^{3}-r_{1}^{3}) \]
where $u_{i}$ is chosen in such a way that ${\rm Area}(S_{t,u_{i}})=4\pi r_{1}^{2}$, $i=1,2$.
\end{proof}

\subsection{Improved pointwise decay on curvature}
\begin{Lemma}
Under the assumption of main theorem, the decay estimates for $\underline{\alpha}$ can be improved to
\[\tau_{+}^{\frac{3}{2}}\tau_{-}^{2}|\underline{\alpha}|\lesssim \varepsilon_0.\]
\end{Lemma}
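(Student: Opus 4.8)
The plan is to feed the asymptotics of \cite{Christodoulou-Klainerman} into the vanishing identity \eqref{eq: vanishing condition}, but now keeping track of the weight $\tau_-$ throughout. We argue in the exterior region $\{r\ge r_0/2\}$, where $u$ and $\tau_\pm$ are defined; in the interior one has $\tau_\pm\simeq r_0$ and the asserted bound is already contained in \eqref{est: curvature int}. Fix $p$ in the exterior and let $\gamma$ be the outgoing null geodesic through $p$, generated by $L=a^{-1}e_4$ with affine parameter $s$ and $\gamma(0)=p$. The decisive observation is that $\gamma$ lies inside the single cone $C_{u(p)}$, so $u$ — and hence $\tau_-=\sqrt{1+u^{2}}$ — is \emph{constant} along $\gamma$. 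Moreover, from $\frac{dr}{ds}=\frac r2\,\overline{a^{-1}{\rm tr}\chi}=1+O(\varepsilon_0 r^{-1})$ together with \eqref{est:r0} one checks that $\gamma$ stays in the exterior region, that $r(s)\simeq r(p)+s$, and that $\tau_+\simeq r$ along $\gamma$.

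The analytic core is to upgrade the bound on the right-hand side of \eqref{eq: L of r2alphab2} to the weighted estimate
\[
\left|\frac{d}{ds}\big(r|\underline{\alpha}|\big)\right|\ \lesssim\ \frac{\varepsilon_0}{\tau_{-}^{2}\,(1+r)^{3/2}},
\]
valid a.e.\ along $\gamma$ (with the usual care at the zeros of $\underline{\alpha}$, where $r|\underline{\alpha}|$ remains Lipschitz). One obtains this by dividing the transport equation \eqref{eq: L of r2alphab2} for $\frac{d}{ds}(r^{2}|\underline{\alpha}|^{2})$ by $2r|\underline{\alpha}|$: each of the five resulting terms is a product of $r$ with curvature and connection coefficients, carrying at most one further factor $r|\underline{\alpha}|$. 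Inserting the Chapter 10 estimates of \cite{Christodoulou-Klainerman} — namely $|\underline{\alpha}|\lesssim\varepsilon_0 r^{-1}\tau_-^{-5/2}$, $|\underline{\beta}|\lesssim\varepsilon_0 r^{-2}\tau_-^{-3/2}$ and the corresponding weighted bound for $\slashed{\nabla}\underline{\beta}$, $|({\rm tr}\chi-\tfrac{2}{r},\widehat{\underline{\chi}},\eta,\zeta,\omega)|\lesssim\varepsilon_0 r^{-2}$, $|(\rho,\sigma)|\lesssim\varepsilon_0 r^{-3}$, $|a-1|\lesssim\varepsilon_0 r^{-1}$, and $\xi=0$ — and using $1\le\tau_-\lesssim r$ in the exterior to trade surplus powers of $r$ for the weight $\tau_-^{-2}$ (for instance $\varepsilon_0^{2}r^{-4}\lesssim\varepsilon_0 r^{-3/2}\tau_-^{-2}$ because $\tau_-^{2}\lesssim r^{2}$), each term is dominated by the right-hand side above.

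With this estimate, \eqref{eq: vanishing condition} and $\underline{\mathbf{A}}\equiv 0$ give $|r\underline{\alpha}|(p)=-\int_0^\infty \frac{d}{ds}(|r\underline{\alpha}|)\,ds$, whence
\[
|r\underline{\alpha}|(p)\ \le\ \int_0^\infty\left|\frac{d}{ds}\big(r|\underline{\alpha}|\big)\right|\,ds\ \lesssim\ \frac{\varepsilon_0}{\tau_-(p)^{2}}\int_0^\infty\frac{ds}{(1+r(s))^{3/2}}\ \lesssim\ \frac{\varepsilon_0}{\tau_-(p)^{2}\,(1+r(p))^{1/2}},
\]
using that $\tau_-$ is constant on $\gamma$ and $r(s)\gtrsim r(p)+s$. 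Dividing by $r(p)$ and invoking $\tau_+(p)\simeq r(p)$ gives $|\underline{\alpha}|(p)\lesssim \varepsilon_0\,\tau_+(p)^{-3/2}\,\tau_-(p)^{-2}$, which is the assertion.

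The step I expect to be the main obstacle is the term $-2a^{-1}r^{2}\underline{\alpha}^{AB}(\slashed{\nabla}\widehat{\otimes}\underline{\beta})_{AB}$: it is the only one involving an angular derivative of a curvature component, and that derivative cannot be removed by integration by parts along $\gamma$, so one must use the sharp $\tau_-$-weighted pointwise decay of $\slashed{\nabla}\underline{\beta}$ from \cite{Christodoulou-Klainerman} (only $\slashed{\nabla}\beta$ appears in the estimates quoted earlier in the paper) and verify that, once combined with $\tau_-\lesssim r$, it still fits inside $\varepsilon_0\tau_-^{-2}(1+r)^{-3/2}$.
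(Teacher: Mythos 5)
Your proposal is correct and follows essentially the same route as the paper: both exploit that $\tau_-$ is constant along the outgoing null generators ($e_4(\tau_-)=0$), derive the $\tau_-$-weighted transport inequality $\left|\frac{d}{ds}(r\tau_-^2|\underline{\alpha}|)\right|\lesssim \varepsilon_0(1+r)^{-3/2}$ from \eqref{eq: L of r2alphab2} and the Chapter 10 asymptotics, and integrate backwards from null infinity using the vanishing of $\underline{\mathbf{A}}$; the regions away from the wave zone are handled by the unimproved bound $|\underline{\alpha}|\lesssim\varepsilon_0\tau_+^{-1}\tau_-^{-5/2}$ together with $\tau_-\sim\tau_+$ there. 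Your explicit flagging of the $\slashed{\nabla}\widehat{\otimes}\underline{\beta}$ term, whose decay is not among the estimates quoted in the paper and must be imported from \cite{Christodoulou-Klainerman}, is a point the paper's own proof passes over silently.
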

\begin{proof}
The key idea is to integrate from null infinity. We notice that in the region where $\{r\leq r_{0} \}\cup\{r\geq 2r_{0}\}$, the decay for $\underline{\alpha}$ is already satisfactory:
 \[|\underline{\alpha}_{AB}|\lesssim\varepsilon_0\tau_{+}^{-1}\tau_{-}^{-\frac{5}{2}}\lesssim\varepsilon_0\tau_{+}^{-\frac{3}{2}}\tau_{-}^{-2}.\]

It remains to prove the improved decay in the region where $\{\frac{1}{2}r_{0}\leq r\leq 2r_{0} \}$. Since $e_4(\tau_-)=0$, we can multiply \eqref{eq: nabla 4 of r2alphab2} by the weight $\tau_-^4$ to derive
\begin{align*}
 \frac{d}{ds}(r^{2}\tau_-^4|\underline{\alpha}|^{2})=&r^{2}\tau_-^4|\underline{\alpha}|^{2}\left(\overline{a^{-1}{\rm tr}\chi}-a^{-1}{\rm tr}\chi\right)-2a^{-1}r^{2}\tau_-^2\underline{\alpha}^{AB} \left(\nabla\widehat{\otimes}(\tau_-^2\underline{\beta})\right)_{AB}\\
 &+8a^{-1}\tau_-^4r^{2}\omega|\underline{\alpha}|^{2} -6a^{-1}r^{2}\tau_-^2\underline{\alpha}^{AB}(\widehat{\underline{\chi}}_{AB}(\tau_-^2\rho)-\,^*\widehat{\underline{\chi}}_{AB}(\tau_-^2\sigma))\\
 &+2a^{-1}r^{2}\tau_-^2\underline{\alpha}^{AB}\left[(\zeta-4\eta)\widehat{\otimes}(\tau_-^2\underline{\beta})\right]_{AB}.
 \end{align*}

We can apply the bounds \eqref{est: curvature},\eqref{est: Hessian ext} and \eqref{est: a} to the righthand side. This leads to 
\[ \frac{d}{ds}\left(r^{2}\tau_-^4|\underline{\alpha}|^{2}\right)\lesssim \frac{\varepsilon_0}{(1+r)^{\frac{3}{2}}} r\tau_-^2 |\underline{\alpha}|.\]
Hence,
\[\frac{d}{ds}\left(r\tau_-^2|\underline{\alpha}|\right)\lesssim \frac{\varepsilon_0}{(1+r)^{\frac{3}{2}}}.\]
We can integrate it from $\infty$ to a fixed $s=t$ where $r$ corresponds to $r_0$. Thus, \eqref{eq: vanishing condition} implies that
\[	|r_{0}\tau_-^2\underline{\alpha}(t,\cdot)|\lesssim \varepsilon_0\int_{t}^{\infty} r_{0}^{-\frac{3}{2}}ds\lesssim \varepsilon_0 r^{-\frac{1}{2}}.\]
This proves the lemma.
\end{proof}
\subsection{Improved pointwise decay on the second fundamental form}
We recall that $\Xi$ is the projection of the second fundamental form $k$ to $S_{t,u}$, i.e., $\Xi_{AB}=k_{AB}$.

\begin{Lemma}
Under the assumption of main theorem,$\widehat{\Xi}$ satisfies the following decay estimate in the region $\{\frac{r_{0}}{2}\leq r\leq 2r_{0} \}$.
\[\tau_{+}^{\frac{3}{2}}|\widehat{\Xi}|\lesssim\varepsilon_0.\]

\end{Lemma}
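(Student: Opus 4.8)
The goal is to upgrade the pointwise bound on $\widehat{\Xi}$ in the annular region $\{\frac{r_0}{2}\le r\le 2r_0\}$ from the a priori decay $r^{-1}\tau_-^{-3/2}r_0^{1/2}$ in \eqref{est: second fundamental form exterior} to $\tau_+^{-3/2}\lesssim r_0^{-3/2}$, now that we have the improved curvature bound $\tau_+^{3/2}\tau_-^2|\underline{\alpha}|\lesssim\varepsilon_0$. The only structure equation in the excerpt that has $\underline{\alpha}$ on the right-hand side and $\widehat{\Xi}$ on the left is the null structure equation \eqref{eq: null structure: 4 chibh}, which is a transport equation for $\widehat{\Xi}$ along the spatial normal $N$ on $\Sigma_t$:
\[
\slashed{\nabla}_{N}\widehat{\Xi}+\tfrac{1}{2}{\rm tr}\theta\,\widehat{\Xi}=\tfrac14(-\underline{\alpha}+\alpha)+\tfrac12\slashed{\nabla}\widehat{\otimes}\zeta+\tfrac32\delta\widehat{\theta}+(a^{-1}\nabla a)\widehat{\otimes}\zeta.
\]
So the plan is: (1) contract \eqref{eq: null structure: 4 chibh} with $\widehat{\Xi}$ to get a transport inequality for $|\widehat{\Xi}|^2$ (equivalently for $|\widehat{\Xi}|$) along the integral curves of $N$ inside $\Sigma_t$; (2) multiply by an appropriate weight in $r$ (note $\nabla_N r = \frac{r}{2a}\overline{a{\rm tr}\theta}>0$, and ${\rm tr}\theta\approx 2/r$ by \eqref{est: Hessian ext}, so the weight $r$ essentially cancels the $\frac12{\rm tr}\theta$ coefficient, just as $r^2$ played that role for $r^2|\underline{\alpha}|^2$ earlier); (3) estimate every term on the right using the improved $|\underline{\alpha}|$ bound for the $\underline{\alpha}$ term and the a priori bounds \eqref{est: curvature}, \eqref{est: second fundamental form exterior}, \eqref{est: Hessian ext}, \eqref{est: a} for $\alpha,\slashed\nabla\widehat\otimes\zeta,\delta\widehat\theta,(a^{-1}\nabla a)\widehat\otimes\zeta$; (4) integrate from the boundary $r=2r_0$ (where the a priori exterior bound already gives $|\widehat\Xi|\lesssim\varepsilon_0 r_0^{-3/2}$) inward to $r=\frac{r_0}{2}$, using that the integral of the forcing against the weight is finite.

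More concretely, writing $F = r\widehat{\Xi}$ (or $r^{p}\widehat{\Xi}$ with $p$ chosen to kill the ${\rm tr}\theta$ term up to lower order), one obtains schematically
\[
\Bigl|\nabla_N|F|\Bigr|\;\lesssim\; \frac{\varepsilon_0}{r_0 \, r^{1/2}} \;+\; (\text{lower-order error involving }|F|\cdot r^{-2}),
\]
where the main forcing term $\frac14|\underline\alpha|\cdot r$ is controlled by $|\underline\alpha|\lesssim \varepsilon_0 r_0^{-3/2}\tau_-^{-2}\le \varepsilon_0 r_0^{-3/2}$ in this region, and $\slashed{\nabla}\widehat{\otimes}\zeta$, $\delta\widehat\theta$, $a^{-1}\nabla a\widehat\otimes\zeta$ are all $O(\varepsilon_0 r^{-3})$ type by \eqref{est: curvature}–\eqref{est: a}, hence far better. (If the available estimates do not directly bound $\slashed\nabla\zeta$ pointwise, one uses $\slashed\nabla\zeta = \slashed\nabla\epsilon$ via the identity $\zeta_A = \epsilon_A$ and the bound $r_0^{-1/2}\|r^{7/2}\slashed\nabla\epsilon\|_{\infty,e}\lesssim\varepsilon_0$ from \eqref{est: second fundamental form exterior}.) Gronwall's inequality in the $r$-variable over the bounded interval $[\frac{r_0}{2},2r_0]$ — on which the coefficient $r^{-2}$ integrates to something $O(r_0^{-1})\ll 1$ — then converts the boundary data plus the forcing integral into
\[
|\widehat{\Xi}|(t,\cdot)\;\lesssim\; \varepsilon_0\, r_0^{-3/2}\;\lesssim\;\varepsilon_0\,\tau_+^{-3/2},
\]
as desired.

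The main obstacle I anticipate is the $\tfrac12\slashed{\nabla}\widehat{\otimes}\zeta$ term: it is a genuine derivative of a Ricci coefficient, of the same "weight class" as $\widehat\Xi$ itself, so one must verify that the available estimates (here \eqref{est: second fundamental form exterior}, which controls $\slashed\nabla\epsilon=\slashed\nabla\zeta$, rather than \eqref{est: Hessian ext} which only controls $\zeta$ undifferentiated) really do give an $r^{-7/2}r_0^{1/2}$-type bound, which is integrable against the $r$ weight and does not reintroduce a loss. A secondary point is bookkeeping the $\tau_-$ weights: in this annulus $r\sim r_0\sim t$ so $\tau_+\sim r_0$ and $\tau_-$ can be as small as $1$, so the improved $|\underline\alpha|$ bound degenerates to $\varepsilon_0 r_0^{-3/2}$ — but that is exactly the strength needed, so no weight in $\tau_-$ should be carried in the transport argument (unlike for $\underline\alpha$, where the $\tau_-^4$ weight was used). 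Finally, one must be slightly careful that $N$ is the unit normal on $\Sigma_t$ and the integral curves of $N$ foliating the annulus have bounded length (comparable to $r_0$), which is exactly what Lemma \ref{cor:volume}-type geometry guarantees, so the Gronwall constant stays universal.
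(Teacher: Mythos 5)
Your proposal follows essentially the same route as the paper: contract the null structure equation \eqref{eq: null structure: 4 chibh} with $r^{2}\widehat{\Xi}$, bound the forcing using the improved decay of $\underline{\alpha}$ together with \eqref{est: curvature}, \eqref{est: second fundamental form exterior}, \eqref{est: Hessian ext} and \eqref{est: a}, and integrate along $N$ (the paper integrates from spatial infinity rather than from $r=2r_{0}$, an immaterial difference). One caution: you must retain the $\tau_{-}^{-2}$ factor in the $\underline{\alpha}$ forcing (and the $\tau_{-}^{-3/2}$ in the remaining terms) when integrating in $r'$, since the annulus has length $\sim r_{0}$ and the crude bound $|\underline{\alpha}|\lesssim\varepsilon_{0}r_{0}^{-3/2}$ alone would lose a factor of $r_{0}$; with that $\tau_{-}$-integrability kept, your schematic inequality and conclusion coincide with the paper's.
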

\begin{proof}
On $\Sigma_t$, we have the following equations for $\widehat{\Xi}$:
\begin{align}
	\slashed{\nabla}_{N}\widehat{\Xi}+\frac{1}{2}{\rm tr}\theta\widehat{\Xi}=\frac{1}{4}(-\underline{\alpha}+\alpha)+\frac{1}{2}\slashed{\nabla}\widehat{\otimes}\zeta+\frac{3}{2}\delta\widehat{\theta}+(a^{-1}\slashed{\nabla} a)\widehat{\otimes}\zeta.
\end{align}
In view of the fact that $\slashed{\nabla}_{N}r=\frac{r}{2a}\overline{a{\rm tr}\theta}$, we obtain
\begin{align*}
	a\nabla_{N}\left(r^2|\widehat{\Xi}|^2\right)&=r^{2}|\widehat{\Xi}|^{2}(\overline{a {\rm tr}\theta}-a {\rm tr}\theta)+\frac{a}{2}r^{2}\widehat{\Xi}^{AB}(-\underline{\alpha}_{AB}+\alpha_{AB})+ar^{2}\widehat{\Xi}^{AB}(\slashed{\nabla}\widehat{\otimes}\zeta)_{AB}\\
	&\ \ \ +3ar^{2}\delta\widehat{\Xi}^{AB}\widehat{\theta}_{AB}+2r^{2}\widehat{\Xi}^{AB}(\slashed{\nabla}\widehat{\otimes}\zeta)_{AB}.
\end{align*}
By virtue of \eqref{est: curvature}, \eqref{est: second fundamental form exterior}, \eqref{est: Hessian ext} and \eqref{est: a},we have
\begin{align*}
	\left|\nabla_{N}(r|\widehat{\Xi}|)^{2}+\frac{1}{2}(r^{2}\underline{\alpha}_{AB}\widehat{\Xi}^{AB})\right|\lesssim \varepsilon_0 r^{-1}\tau_{-}^{-\frac{3}{2}}|r\widehat{\Xi}|.
\end{align*}
According to the improved decay on $\underline{\alpha}$, we have
\[	|r^{2}\underline{\alpha}_{AB}\widehat{\Xi}^{AB}|\lesssim\varepsilon_0\tau_{+}^{-\frac{1}{2}}\tau_{-}^{-2}|r\widehat{\Xi}|.
\]
Therefore, 
\[\left|\nabla_{N}|r\widehat{\Xi}|\right|\lesssim\varepsilon_0\tau_{+}^{-\frac{1}{2}}\tau_{-}^{-\frac{3}{2}}.\]
Since $\frac{1}{2}r_{0}\leq r\leq 2r_{0}$, we integrate the above equation from spatial infinity to derive
\begin{align*}
	\left|r\widehat{\Xi}\right|=|\int_{r}^{\infty}a\nabla_{N}(r'|\widehat{\Xi}|)dr'\lesssim\varepsilon_0\tau_{+}^{-\frac{1}{2}}.
\end{align*}
This gives the desired bound.
\end{proof}

\begin{Corollary}
Under the assumption of main theorem, the decay estimates for ${}^{(T)}\widehat{\pi}$ in the exterior region can be improved to 
\[	\|\tau_{+}^{\frac{3}{2}}\,{}^{(T)}\widehat{\pi}_{AB}\|_{\infty,e}\lesssim\varepsilon_0.\]
\end{Corollary}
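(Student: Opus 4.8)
The plan is to read ${}^{(T)}\widehat{\pi}_{AB}$ directly off the second fundamental form and then feed in the improved decay of $\widehat{\Xi}$ just obtained, patching it with the exterior estimates \eqref{est: second fundamental form exterior} already recorded from \cite{Christodoulou-Klainerman}. First I would unpack \eqref{deformation: T}: the $S_{t,u}$-tangential block is ${}^{(T)}\pi_{AB}=-2\Xi_{AB}$. Since the time foliation $\{\Sigma_t\}$ is maximal, ${\rm tr}_{\overline{g}}k=0$ forces ${\rm tr}_{S_{t,u}}\Xi=-\delta$, so a short null-frame computation gives the spacetime trace ${\rm tr}_g\,{}^{(T)}\pi=\delta^{AB}\,{}^{(T)}\pi_{AB}-{}^{(T)}\pi_{34}=-2\,{\rm tr}_{S_{t,u}}\Xi-2\delta=0$; hence ${}^{(T)}\widehat{\pi}_{AB}={}^{(T)}\pi_{AB}=-2\widehat{\Xi}_{AB}+\delta\,\delta_{AB}$. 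In particular $|{}^{(T)}\widehat{\pi}_{AB}|\lesssim|\widehat{\Xi}|+|\delta|$ pointwise, so it suffices to prove $\tau_{+}^{\frac{3}{2}}|\widehat{\Xi}|\lesssim\varepsilon_0$ and $\tau_{+}^{\frac{3}{2}}|\delta|\lesssim\varepsilon_0$ everywhere on $\Sigma_t^e$.

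The bound for $\delta$ is immediate: \eqref{est: second fundamental form exterior} gives $|\delta|\lesssim\varepsilon_0 r_0^{\frac{1}{2}}r^{-\frac{5}{2}}$ on all of $\Sigma_t^e$, and since $\tau_{+}\lesssim r$ there (a routine consequence of \eqref{est:r0} and the geometry of the optical function) together with $r\geq\frac{1}{2}r_0$, this yields $\tau_{+}^{\frac{3}{2}}|\delta|\lesssim\varepsilon_0 r_0^{\frac{1}{2}}r^{-1}\lesssim\varepsilon_0$. For $\widehat{\Xi}$ I would split $\Sigma_t^e=\{r\geq\frac{1}{2}r_0\}$ into $\{\frac{1}{2}r_0\leq r\leq 2r_0\}$ and $\{r\geq 2r_0\}$. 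On the first piece the preceding lemma gives $\tau_{+}^{\frac{3}{2}}|\widehat{\Xi}|\lesssim\varepsilon_0$ directly. On the second piece one has $\tau_{-}\sim r$ (there $|u|\sim r-r_0\sim r$ by \eqref{est:r0}) and $r_0\gtrsim 1$, so the weight $\min\{\tau_{-}^{\frac{3}{2}}r_0^{\frac{1}{2}},r^{\frac{3}{2}}\}$ in \eqref{est: second fundamental form exterior} is comparable to $r^{\frac{3}{2}}$, whence $|\widehat{\Xi}|\lesssim\varepsilon_0 r_0^{\frac{1}{2}}r^{-\frac{5}{2}}$, and $\tau_{+}\lesssim r$ again gives $\tau_{+}^{\frac{3}{2}}|\widehat{\Xi}|\lesssim\varepsilon_0$. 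Combining the two subregions with $|{}^{(T)}\widehat{\pi}_{AB}|\lesssim|\widehat{\Xi}|+|\delta|$ completes the proof.

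There is no essential obstacle here: the one genuinely new input is the improved near-light-cone decay of $\widehat{\Xi}$, which is precisely the preceding lemma, and the rest is unpacking \eqref{deformation: T} and the exterior estimates quoted from \cite{Christodoulou-Klainerman}. The only points needing a little care are the weight bookkeeping — verifying $\tau_{+}\sim r$ on $\Sigma_t^e$ and $\tau_{-}\sim r$ on $\{r\geq 2r_0\}$, so that the $r^{\frac{3}{2}}$ branch of the minimum in \eqref{est: second fundamental form exterior} is the active one — and the elementary identity ${\rm tr}_g\,{}^{(T)}\pi=0$, which rests only on the maximality of the time foliation and ensures that the trace term $\delta\,\delta_{AB}$ in ${}^{(T)}\widehat{\pi}_{AB}$ contributes only at the harmless rate $\tau_{+}^{\frac{3}{2}}|\delta|\lesssim\varepsilon_0$.
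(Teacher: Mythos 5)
Your proof is correct and follows the same route as the paper, which disposes of this corollary in a single line by invoking ${}^{(T)}\pi_{AB}=-2\Xi_{AB}$ together with the preceding lemma on $\widehat{\Xi}$. You additionally supply the details the paper leaves implicit --- the vanishing of ${\rm tr}_g\,{}^{(T)}\pi$ on the maximal foliation, the trace contribution $\delta\,\delta_{AB}$ controlled via \eqref{est: second fundamental form exterior}, and the region $\{r\geq 2r_0\}$ where the preceding lemma is silent and the $\min$-weighted estimate of \eqref{est: second fundamental form exterior} with $\tau_-\sim r$ must be used instead --- and all of these are handled correctly.
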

This is clear from the formula that ${}^{(T)}\pi_{AB}=-2\Xi_{AB}$.

\subsection{Decay in the energy estimates}
For each $t\geq 0$, we define
 \[E(t)=\int_{\Sigma_{t}}Q(T,T,T,T).\]
 We recall that $Q$ is the Bel-Robinson tensor associated to the curvature tensor of the spacetime. In view of \eqref{eq: computation for Q}, to show that the curvature is vanishing, it suffices to show that $E(t)\equiv 0$. \begin{Lemma}\label{lemma: lemma5}
Under the assumption of main theorem, for all $t_1, t_2\geq 0$, $t_1<t_2$, we have
\[
	|E(t_{2})-E(t_{1})|\lesssim \varepsilon_0 \sup_{s\in [t_{1},t_{2}]}E(s).
\]
\end{Lemma}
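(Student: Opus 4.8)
The plan is to apply the energy identity \eqref{basic energy estimates} with $(X,Y,Z)=(T,T,T)$, so that the left-hand side becomes $E(t_2)-E(t_1)$ and the right-hand side becomes
\[
\frac{3}{2}\int_{t_1}^{t_2}\phi\left(\int_{\Sigma_t}Q^{\alpha\beta\gamma\delta}\,{}^{(T)}\pi_{\alpha\beta}T_\gamma T_\delta\right)dt.
\]
Since $\phi=1+O(\varepsilon_0)$ by \eqref{est:lapse}, it suffices to bound the spacetime integral of $Q^{\alpha\beta\gamma\delta}\,{}^{(T)}\widehat{\pi}_{\alpha\beta}T_\gamma T_\delta$ by $\varepsilon_0\sup_{s\in[t_1,t_2]}E(s)$; note the trace part of ${}^{(T)}\pi$ drops out because $Q$ is traceless. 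The strategy is the standard one from Chapter 6 of \cite{Christodoulou-Klainerman}: decompose $Q^{\alpha\beta\gamma\delta}\,{}^{(T)}\widehat{\pi}_{\alpha\beta}T_\gamma T_\delta$ into null components, pairing each null component of ${}^{(T)}\widehat{\pi}$ with the corresponding null component of the tensor $Q_{\alpha\beta}{}^{33}$-type contractions (which are themselves quadratic in $\alpha,\beta,\rho,\sigma,\underline\beta,\underline\alpha$), and estimate each term by Hölder, placing $\widehat\pi$ in $L^\infty$ and the two curvature factors in $L^2$.

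The key point is a weight count. In the interior region $\Sigma_t^i$, the bound \eqref{est: deformation tensor T Kb 1} gives $\|{}^{(T)}\widehat\pi\|_{\infty,i}\lesssim \varepsilon_0 r_0^{-3/2}$, and since $r_0\sim 1+t$, a typical interior term is controlled by $\varepsilon_0(1+t)^{-3/2}\int_{\Sigma_t^i}|\text{curvature}|^2\lesssim \varepsilon_0(1+t)^{-3/2}E(t)$, whose time integral converges — here I would use that $E$ appears inside the supremum. In the exterior region, the dangerous components are those where ${}^{(T)}\widehat\pi$ only decays like $r^{-1}\tau_-^{-1}$, namely ${}^{(T)}\widehat\pi_{AB}=-2\widehat\Xi_{AB}$; this is precisely where the vanishing-radiation hypothesis enters, through the preceding Corollary, which upgrades $\|\tau_+^{3/2}\,{}^{(T)}\widehat\pi_{AB}\|_{\infty,e}\lesssim\varepsilon_0$. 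The remaining exterior components of ${}^{(T)}\widehat\pi$ already decay like $r^{-2}$ by \eqref{est: deformation tensor T Kb 2}, which is more than enough. One then checks that for every pairing, the product of the $\widehat\pi$-weight with the natural $L^2$-weights on the two curvature components (coming from $r,\tau_+,\tau_-$ powers implicit in $E(t)\gtrsim \int r^2\tau_+^{\cdots}|\cdot|^2$, or more simply from \eqref{eq: computation for Q} with $\overline K$) yields an $r$-integrable, hence $t$-integrable, density.

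Concretely, after the null decomposition one lands on an inequality of the schematic form
\[
|E(t_2)-E(t_1)|\lesssim \int_{t_1}^{t_2}\Big( \|{}^{(T)}\widehat\pi\|_{\infty,i}\,E(t)+\sup_e\big(\lambda(r,t)\,{}^{(T)}\widehat\pi\big)\cdot\!\int_{\Sigma_t^e}\lambda(r,t)^{-1}|\text{curv}|^2\Big)dt,
\]
where $\lambda$ is the appropriate weight making the exterior curvature integral bounded by $E(t)$; using $\|{}^{(T)}\widehat\pi\|_{\infty,i}\lesssim\varepsilon_0(1+t)^{-3/2}$ and the improved exterior bound $\lambda\,{}^{(T)}\widehat\pi\lesssim\varepsilon_0(1+t)^{-3/2}$ (this is where $\tau_+\sim 1+t$ in the transition region $\frac{r_0}{2}\le r\le 2r_0$ is used, together with the fact that outside that region the curvature decay is already strong), one obtains $|E(t_2)-E(t_1)|\lesssim\varepsilon_0\int_{t_1}^{t_2}(1+t)^{-3/2}\sup_{s}E(s)\,dt\lesssim\varepsilon_0\sup_{s\in[t_1,t_2]}E(s)$.

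The main obstacle is purely bookkeeping: one must verify that \emph{no} pairing in the null decomposition of $Q^{\alpha\beta\gamma\delta}\,{}^{(T)}\widehat\pi_{\alpha\beta}T_\gamma T_\delta$ produces a term whose weights fail to give a time-integrable density — in particular the terms involving $\underline\alpha$ (the worst-decaying curvature component) paired with ${}^{(T)}\widehat\pi_{AB}$ and with ${}^{(T)}\widehat\pi_{3A}$ must be checked carefully, and it is exactly for the $\underline\alpha$--$\widehat\Xi$ pairing that the earlier improvement from $\underline{\mathbf A}\equiv 0$ is indispensable; without it ${}^{(T)}\widehat\pi_{AB}$ decays only like $r^{-1}$ and the density would be borderline non-integrable. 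Everything else reduces to Hölder's inequality, the volume bound of Lemma~\ref{cor:volume} (to pass from pointwise weights to $L^2$ over the transition annulus), and the quantitative estimates quoted from \cite{Christodoulou-Klainerman}.
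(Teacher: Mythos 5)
Your proposal is correct and follows essentially the same route as the paper: apply the energy identity with $(X,Y,Z)=(T,T,T)$, decompose $Q^{\alpha\beta\gamma\delta}\,{}^{(T)}\widehat{\pi}_{\alpha\beta}T_\gamma T_\delta$ into null components, bound each $Q$-factor pointwise by $Q(T,T,T,T)$, use \eqref{est: deformation tensor T Kb 1} in the interior and \eqref{est: deformation tensor T Kb 2} together with the improved bound $\|\tau_+^{3/2}\,{}^{(T)}\widehat{\pi}_{AB}\|_{\infty,e}\lesssim\varepsilon_0$ (the only place the vanishing-radiation hypothesis enters) in the exterior, and integrate the resulting $(1+s)^{-3/2}E(s)$ density in time. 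You correctly isolate ${}^{(T)}\widehat{\pi}_{AB}=-2\widehat{\Xi}_{AB}$ as the borderline component, which is exactly the point of the paper's argument.
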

\begin{proof}
We apply the basic energy identity  \eqref{basic energy estimates} for $(X,Y,Z)=(T,T,T)$ in the spacetime region bounded by $\Sigma_{t_1}$ and $\Sigma_{t_2}$. This leads to
\begin{equation}\label{eq:a1}
E(t_{2})-E(t_{1})=\frac{3}{2}\int_{t_{1}}^{t_{2}}\phi \left(\int_{\Sigma_{s}}Q_{\alpha\beta\gamma\delta}{}^{(T)}\widehat{\pi}^{\alpha\beta}T^{\gamma}T^{\delta}\right)ds.
\end{equation}
We compute the integrand $Q_{\alpha\beta\gamma\delta}{}^{(T)}\widehat{\pi}^{\alpha\beta}T^{\gamma}T^{\delta}$ as follows:
\begin{align*}
Q_{\alpha\beta\gamma\delta}{}^{(T)}\widehat{\pi}^{\alpha\beta}T^{\gamma}T^{\delta}&=\frac{1}{4}Q(T,T,e_4,e_4){}^{(T)}\widehat{\pi}_{33}+\frac{1}{4}Q(T,T,e_3,e_3){}^{(T)}\widehat{\pi}_{44}+\frac{1}{2}Q(T,T,e_3,e_4){}^{(T)}\widehat{\pi}_{43}\\
&\ \ -\frac{1}{2}Q(T,T,e_4,e_A){}^{(T)}\widehat{\pi}_{3A}-\frac{1}{2}Q(T,T,e_3,e_A){}^{(T)}\widehat{\pi}_{4A}+\frac{1}{4}Q(T,T,e_A,e_B){}^{(T)}\widehat{\pi}_{AB}.
\end{align*}
By writing $T$ as $\frac{1}{2}(e_4+e_3)$, we can bound each $Q(\cdot,\cdot,\cdot,\cdot)$ term in the above formula by a constant times $Q(T,T,T,T)$.  To bound the right hand side of \eqref{eq:a1}, it suffices to bound the $\widehat{\pi}$-terms.

In the interior region, in view of \eqref{est: deformation tensor T Kb 1}, we have
\begin{align*}
	\left|\int_{\Sigma^i_{s}}Q_{\alpha\beta\gamma\delta}{}^{(T)}\widehat{\pi}^{\alpha\beta}T^{\gamma}T^{\delta}\right|\lesssim \varepsilon_0 r_{0}^{-\frac{3}{2}}\int_{\Sigma_{s}}Q(T,T,T,T)= \varepsilon_0 r_{0}^{-\frac{3}{2}}E(s).
\end{align*}
In the exterior region, in view of \eqref{est: deformation tensor T Kb 2} and the improved estimates of ${}^{(T)}\widehat{\pi}$ in the exterior region, we have
\begin{align*}
	\left|\int_{\Sigma^e_{s}}Q_{\alpha\beta\gamma\delta}{}^{(T)}\widehat{\pi}^{\alpha\beta}T^{\gamma}T^{\delta}\right|\lesssim \varepsilon_0 r_{0}^{-\frac{3}{2}}\int_{\Sigma_{s}}Q(T,T,T,T)= \varepsilon_0 r_{0}^{-\frac{3}{2}}E(s).
\end{align*}
Putting these two estimates together, we have
\begin{align*}
	\left|\int_{\Sigma_{s}}Q_{\alpha\beta\gamma\delta}{}^{(T)}\widehat{\pi}^{\alpha\beta}T^{\gamma}T^{\delta}\right|\lesssim  \varepsilon_0 r_{0}^{-\frac{3}{2}}E(s).
\end{align*}
In view of \eqref{est:r0} and \eqref{est:lapse}, the equation \eqref{eq:a1} leads to
\begin{align*}
\left|E(t_{2})-E(t_{1})\right|&\lesssim \varepsilon_0\int_{t_{1}}^{t_{2}}r_{0}(s)^{-\frac{3}{2}}E(s)ds\\
&\lesssim \varepsilon_0\left(\int_{t_{1}}^{t_{2}}(1+s)^{-\frac{3}{2}}ds\right)\sup_{s\in[t_{1},t_{2}]}E(s)\lesssim \varepsilon_0\sup_{s\in[t_{1},t_{2}]}E(s).
\end{align*}
This completes proof.
\end{proof}

\begin{Lemma}
Under the assumption of main theorem, we have the following limit:
\[\lim_{t\rightarrow\infty}E(t)=0.\]
\end{Lemma}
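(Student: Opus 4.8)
The plan is to show that the conformal energy (the one weighted by $\overline K$) at future null infinity is finite and then let $t\to\infty$ to extract the decay of $E(t)$. First I would apply the basic energy identity \eqref{basic energy estimates} with $(X,Y,Z)=(T,T,\overline{K})$ over the slab between $\Sigma_{t_1}$ and $\Sigma_{t_2}$, obtaining
\[
\mathcal{E}(t_2)-\mathcal{E}(t_1)=\frac12\int_{t_1}^{t_2}\phi\left(\int_{\Sigma_s}Q^{\alpha\beta\gamma\delta}\,{}^{(T)}\widehat{\pi}_{\alpha\beta}(2T_\gamma \overline{K}_\delta)+Q^{\alpha\beta\gamma\delta}\,{}^{(\overline{K})}\widehat{\pi}_{\alpha\beta}T_\gamma T_\delta\right)ds,
\]
where $\mathcal{E}(t)=\int_{\Sigma_t}Q(\overline K,T,T,T)$. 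Using \eqref{eq: computation for Q}, the pointwise bound \eqref{eq: bound Q by conformal Q}, the deformation tensor estimates \eqref{est: deformation tensor T Kb 1}, \eqref{est: deformation tensor T Kb 2}, \eqref{est: deformation tensor T Kb 3}, \emph{together with} the improved bound $\|\tau_+^{3/2}\,{}^{(T)}\widehat{\pi}_{AB}\|_{\infty,e}\lesssim\varepsilon_0$ from the previous Corollary, I would control the error integrand by $\varepsilon_0 (1+s)^{-3/2}\mathcal{E}(s)$ in the interior and $\varepsilon_0 (1+s)^{-1}\tau_+^{-1/2}\mathcal{E}(s)$ (or similar integrable-in-$s$ weight) in the exterior — this is exactly where the improved decay of ${}^{(T)}\widehat{\pi}_{AB}$, hence of $\widehat\Xi$, hence of the vanishing of $\underline{\mathbf A}$, is needed, because the unimproved estimate \eqref{est: deformation tensor T Kb 2} alone would only give the borderline nonintegrable weight $\varepsilon_0(1+s)^{-1}$.

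The key structural point is that with the improved weights the $s$-integral of the error coefficient converges, so a Grönwall-type argument gives $\mathcal{E}(t)\lesssim \mathcal{E}(0)<\infty$ uniformly in $t$, and moreover that $\int_{t}^{\infty}(\text{error coefficient})\,ds\to 0$ as $t\to\infty$. Combining $\mathcal{E}(t)\le C\mathcal{E}(0)$ with \eqref{eq: bound Q by conformal Q} is not yet enough (that only bounds $E(t)$ by $\tau_-^2\mathcal E(t)$ pointwise, which blows up after integration); instead I would argue as follows. By Lemma \ref{lemma: lemma5}, $E(t)$ has a limit as $t\to\infty$, say $E_\infty=\lim_{t\to\infty}E(t)\ge0$; it remains to show $E_\infty=0$. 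Here I would use the Bel–Robinson flux through the null cones: the identity \eqref{basic energy estimates} applied in the exterior region between a far-out sphere and null infinity shows that the flux of $Q(\overline K,T,T,e_3)$ (or $Q$ contracted with $e_3$) through $C_u$ is controlled by $\mathcal{E}$, and the conservation $D^\alpha Q_{\alpha\beta\gamma\delta}=0$ together with finiteness of $\mathcal E$ forces this outgoing flux to tend to $0$. Since $E(t)$ minus the flux contributions through the light cones up to time $t$ is, up to $\varepsilon_0$-errors, nonincreasing, and the total flux out to infinity is finite, $E(t)$ must tend to $0$.

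Concretely I would run it in the cleaner order: (i) establish $\mathcal{E}(t)\le C\mathcal E(0)$ for all $t$ by the Grönwall argument above; (ii) note that $E(t)=\int_{\Sigma_t}Q(T,T,T,T)$ splits into an interior piece $\int_{\Sigma_t^i}$ and an exterior piece $\int_{\Sigma_t^e}$; in the interior, $Q(T,T,T,T)\le \mathrm{const}\cdot Q(\overline K,T,T,T)$ on a region of radius $\lesssim r_0\sim t$, and the interior curvature estimate \eqref{est: curvature int} gives $\int_{\Sigma_t^i}Q(T,T,T,T)\lesssim \varepsilon_0^2\, r_0^{-7}\cdot r_0^3=\varepsilon_0^2 r_0^{-4}\to0$; (iii) in the exterior, use the energy identity between $\Sigma_t$ and null infinity (where, by the vanishing of $\underline{\mathbf A}$ and the improved decay, the contribution of $\underline\alpha$ to $Q(\overline K,T,T,T)$ also decays) to show $\int_{\Sigma_t^e}Q(T,T,T,T)\le \mathcal E(t)\cdot\sup_{\Sigma_t^e}\tau_-^2 /\tau_+^2 + (\text{flux to null infinity from }C_u,\ u\le u(t))$, and let $t\to\infty$; both terms vanish in the limit because $\mathcal E$ is bounded and the tail of a convergent flux integral vanishes. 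The main obstacle I anticipate is step (iii): carefully justifying that the Bel–Robinson flux through the outgoing cones is finite and has vanishing tail, i.e.\ converting the finiteness of $\mathcal E(t)$ (a spatial quantity) into decay of a null flux, which requires a clean version of the energy identity in the exterior region with the correct weights $\tau_+,\tau_-$ and the already-proven improved decay of $\underline\alpha$ so that the $|\underline\alpha|^2$-term in $Q(\overline K,T,T,T)=\tau_-^2|\underline\alpha|^2+\tau_+^2(\cdots)$ does not spoil integrability.
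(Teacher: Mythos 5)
Your proposal opens the same way as the paper (the conformal energy identity with multiplier $(\overline{K},T,T)$, fed by the improved decay of $\underline{\alpha}$ and of ${}^{(T)}\widehat{\pi}_{AB}$), and your interior estimate in step (ii) is essentially the paper's. But the decisive step is precisely the one you flag as your ``main obstacle'' and never carry out: near the light cone, where $\tau_-=O(1)$, the comparison $Q(T,T,T,T)\leq\tau_-^{-2}Q(\overline{K},T,T,T)$ gives no gain, and the $|\underline{\alpha}|^2$ contribution to $E(t)$ is matched only by the unamplified $\tau_-^2|\underline{\alpha}|^2$ term of the conformal density. Your proposed remedy --- a Bel--Robinson flux argument through the outgoing cones whose tail is claimed to vanish --- is left entirely schematic: you do not write the exterior energy identity, identify which null flux is actually controlled by $\mathcal{E}$, or justify why its tail tends to zero. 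As it stands this is a genuine gap, not a routine verification.

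The paper closes exactly this gap with an elementary device your proposal is missing: split $\Sigma_t$ along the thin annulus $\{|r-r_0|\leq r_0^{2/3}\}$. Off the annulus one has $\tau_-\gtrsim r_0^{2/3}$, so the conformal bound $\int_{\Sigma_t}Q(\overline{K},T,T,T)\lesssim\varepsilon_0^2+\varepsilon_0^3\log(1+t)$ produces a factor $t^{-4/3}\log(1+t)\to0$. On the annulus the energy is not used at all: the improved pointwise bound $|\underline{\alpha}|\lesssim\varepsilon_0\tau_+^{-3/2}\tau_-^{-2}$ gives $Q(T,T,T,T)\lesssim\varepsilon_0^2r_0^{-3}$ there, and Lemma \ref{cor:volume} bounds the annulus volume by $r_0^{8/3}$, so this piece is $\lesssim\varepsilon_0^2r_0^{-1/3}\to0$. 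Two further caveats: (a) your step (i) asserts a uniform bound $\mathcal{E}(t)\leq C\mathcal{E}(0)$ via Gr\"onwall with integrable-in-$s$ coefficients, but you give no computation supporting this; the paper's direct estimate of the same error terms (e.g.\ the $I_2$ term containing ${}^{(T)}\widehat{\pi}_{44}$ together with $\tau_+^2|\underline{\beta}|^2$) yields only the logarithmically growing bound, which fortunately suffices once the annulus splitting is in place; (b) deducing that $E(t)$ converges from Lemma \ref{lemma: lemma5} requires the quantitative coefficient $\int_{t_1}^{t_2}(1+s)^{-3/2}\,ds$ appearing in its proof, not merely the statement of that lemma.
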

\begin{proof}We apply the basic energy identity  \eqref{basic energy estimates} for $(X,Y,Z)=(\overline{K},T,T)$ in the spacetime region bounded by $\Sigma_{t_1}$ and $\Sigma_{t_2}$. This leads to
\begin{align*}
	&\int_{\Sigma_{t_{2}}}Q (\overline{K},T,T,T) -\int_{\Sigma_{t_{1}}}Q(\overline{K},T,T,T)\\
=&\int_{t_{1}}^{t_{2}}\phi \left(\int_{\Sigma_{s}}Q_{\alpha\beta\gamma\delta}{}^{(T)}\widehat{\pi}^{\alpha\beta}\overline{K}^{\gamma}T^{\delta}\right)ds+\frac{1}{2}\int_{t_{1}}^{t_{2}}\phi \left(\int_{\Sigma_{s}}Q_{\alpha\beta\gamma\delta}{}^{(\overline{K})}\widehat{\pi}^{\alpha\beta}T^{\gamma}T^{\delta}\right)ds
\end{align*}
We compute the integrands $Q_{\alpha\beta\gamma\delta}{}^{(T)}\widehat{\pi}^{\alpha\beta}\overline{K}^{\gamma}T^{\delta}$ and $Q_{\alpha\beta\gamma\delta}{}^{(\overline{K})}\widehat{\pi}^{\alpha\beta}T^{\gamma}T^{\delta}$ on the righthand side of the above equation as follows:
\begin{equation}\label{eq:Q1}
\begin{split}
Q_{\alpha\beta\gamma\delta}{}^{(T)}\widehat{\pi}^{\alpha\beta}T^{\gamma}T^{\delta}&=\underbrace{\frac{1}{4}Q(\overline{K},T,e_4,e_4){}^{(T)}\widehat{\pi}_{33}}_{I_1}+\underbrace{\frac{1}{4}Q(\overline{K},T,e_3,e_3){}^{(T)}\widehat{\pi}_{44}}_{I_2}\\
&\ \ +\underbrace{\frac{1}{2}Q(\overline{K},T,e_3,e_4){}^{(T)}\widehat{\pi}_{43}}_{I_3}\underbrace{-\frac{1}{2}Q(\overline{K},T,e_4,e_A){}^{(T)}\widehat{\pi}_{3A}}_{I_4}\\
&\ \ \underbrace{-\frac{1}{2}Q(\overline{K},T,e_3,e_A){}^{(T)}\widehat{\pi}_{4A}}_{I_5}+\underbrace{\frac{1}{4}Q(\overline{K},T,e_A,e_B){}^{(T)}\widehat{\pi}_{AB}}_{I_6},
\end{split}
\end{equation}
and
\begin{equation}\label{eq:Q2}
\begin{split}
Q_{\alpha\beta\gamma\delta}{}^{(\overline{K})}\widehat{\pi}^{\alpha\beta}T^{\gamma}T^{\delta}&=\frac{1}{4}Q(T,T,e_4,e_4){}^{(\overline{K})}\widehat{\pi}_{33}+\frac{1}{4}Q(T,T,e_3,e_3){}^{(\overline{K})}\widehat{\pi}_{44}\\
& \ \ +\frac{1}{2}Q(T,T,e_3,e_4){}^{(\overline{K})}\widehat{\pi}_{43} -\frac{1}{2}Q(T,T,e_4,e_A){}^{(\overline{K})}\widehat{\pi}_{3A}\\
&\ \ -\frac{1}{2}Q(T,T,e_3,e_A){}^{(\overline{K})}\widehat{\pi}_{4A}+\frac{1}{4}Q(T,T,e_A,e_B){}^{(\overline{K})}\widehat{\pi}_{AB}.
\end{split}
\end{equation}
We recall that
 \[\overline{K}=T+ K=\frac{1}{2}\left(\tau_-^2e_{3}+\tau_+^2e_{4}\right).\]
In the first case where $Q$ is contracted with $\overline{K}$, we bound $Q$ by $\tau_{+}^{2} Q(T,T,T,T)$; in the second case, we bound $Q$ by $Q(T,T,T,T)$.

In the interior region, in view of \eqref{est: curvature}, \eqref{est: deformation tensor T Kb 1}, \eqref{est: deformation tensor T Kb 2} and \eqref{est: deformation tensor T Kb 3},   we have
\begin{align*}
\left|\int_{t_{1}}^{t_{2}}\phi \left(\int_{\Sigma^i_{s}}Q_{\alpha\beta\gamma\delta}{}^{(T)}\widehat{\pi}^{\alpha\beta}\overline{K}^{\gamma}T^{\delta}\right)ds\right|&\lesssim \int_{t_{1}}^{t_{2}}\left(\int_{\Sigma^i_{s}}\tau_{+}^{2}\left| Q(T,T,T,T)\right|\cdot \left|{}^{(T)}\widehat{\pi}\right|\right)ds\\
&\lesssim \int_{t_{1}}^{t_{2}}\left(\int_{\Sigma^i_{s}}\tau_{+}^{2} r_0(s)^{-6}\varepsilon_0^2\cdot r_0(s)^{-\frac{3}{2}}\varepsilon_0\right)ds.
\end{align*}
Since $\tau_{+} \sim r_0(s) \sim s$ in the interior region, we derive that
\begin{align*}
\left|\int_{t_{1}}^{t_{2}}\phi \left(\int_{\Sigma^i_{s}}Q_{\alpha\beta\gamma\delta}{}^{(T)}\widehat{\pi}^{\alpha\beta}\overline{K}^{\gamma}T^{\delta}\right)ds\right|& \lesssim \varepsilon_0^3\int_{t_{1}}^{t_{2}}s^{-\frac{5}{2}}ds\lesssim \varepsilon_0^3 
\end{align*}
Similarly, we have
\begin{align*}
\left|\int_{t_{1}}^{t_{2}}\phi \left(\int_{\Sigma^i_{s}}Q_{\alpha\beta\gamma\delta}{}^{(\overline{K})}\widehat{\pi}^{\alpha\beta}T^{\gamma}T^{\delta}\right)ds\right|& \lesssim \varepsilon_0^3.
\end{align*}
In fact, for the region where $r\geq 2r_0$, since $\tau_-\sim \tau_+$, all the components in the above prove enjoy the same estimates as in the interior region. We can then proceed exactly in the same way to prove that the corresponding error integrals are bounded above by a universal constant times $\varepsilon_0^3$. We omit the details since they are straightforward.

It remains to bound the two error integrals where $r\in [\frac{r_0}{2},2r_0]$. According to the decomposition in \eqref{eq:Q1}, we decompose the error integral as follows:
\begin{align*}
\left|\int_{t_{1}}^{t_{2}}\phi \left(\int_{ \{\frac{r_0}{2}\leq r \leq 2r_0\}}Q_{\alpha\beta\gamma\delta}{}^{(T)}\widehat{\pi}^{\alpha\beta}\overline{K}^{\gamma}T^{\delta}\right)ds\right|&=\mathbf{I}_1+\mathbf{I}_2+\mathbf{I}_3+\mathbf{I}_4+\mathbf{I}_5+\mathbf{I}_6.
\end{align*}
From the decay point of view, the most difficult terms are $\mathbf{I}_6$ and $\mathbf{I}_2$ . We first compute that
\begin{align*}
Q_{44AB}&=2|\beta|^2\delta_{AB}+2\rho \alpha_{AB}-2\sigma \,^{*}\alpha_{AB},\\
Q_{33AB}&=2|\beta|^2\delta_{AB}+2\rho \underline{\alpha}_{AB}+2\sigma \,^{*}\underline{\alpha}_{AB},\\
Q_{34AB}&=2(\beta\cdot \underline{\beta}+|\rho|^2+|\sigma|^2)\delta_{AB}-2(\beta_A\underline{\beta}_B+\beta_B\underline{\beta}_A).
\end{align*}
Therefore, we have
\begin{align*}
I_6&=\frac{1}{16}\left(\tau_-^2Q_{33AB}+\tau_+^2Q_{44AB}+(\tau_-^2+\tau_+^2)Q_{34AB}\right){}^{(T)}\widehat{\pi}_{AB}\\
&=\left[\frac{\tau_-^2}{8}\left(\rho \underline{\alpha}_{AB}+\sigma \,^{*}\underline{\alpha}_{AB}\right)+\frac{\tau_+^2}{8}\left(\rho{\alpha}_{AB}-\sigma \,^{*}{\alpha}_{AB}\right) -\frac{\tau_-^2+\tau_+^2}{4}\beta_A\underline{\beta}_B\right]{}^{(T)}\widehat{\pi}_{AB}.
\end{align*}
In view of \eqref{est: curvature}, in particular the $\tau_-$-weight in $\underline{\alpha}$,  to obtain that
\[|\frac{\tau_-^2}{8}\left(\rho \underline{\alpha}_{AB}+\sigma \,^{*}\underline{\alpha}_{AB}\right)|\lesssim \tau_+^{-4}\varepsilon_0^2.\]
Similarly, we have
\[|\frac{\tau_+^2}{8}\left(\rho{\alpha}_{AB}-\sigma \,^{*}{\alpha}_{AB}\right) |\lesssim\tau_+^{-\frac{9}{2}}\varepsilon_0^2,\]
and 
\[|\frac{\tau_-^2+\tau_+^2}{4}\beta_A\underline{\beta}_B|\lesssim \tau_+^{-\frac{7}{2}}\varepsilon_0^2.\]
Therefore,
\[|I_6|\lesssim \tau_+^{-5}\varepsilon_0^3.\]
Therefore, we have
\[\mathbf{I}_6\lesssim \int_{t_1}^{t_2} \tau_+^{-2}\varepsilon_0^3\lesssim \varepsilon_0^3.\]
For $\mathbf{I}_2$, using the fomulas
\[Q_{3333}=2|\underline{\alpha}|^2, Q_{4433}=4\left(|\rho|^2+|\sigma|^2\right), Q_{4333}=4|\underline{\beta}|^2,\]
we first compute
\begin{align*}
I_2&=\frac{1}{16}\left(\tau_-^2Q_{3333}+\tau_+^2Q_{4433}+(\tau_-^2+\tau_+^2)Q_{3433}\right){}^{(T)}\widehat{\pi}_{44}\\
&=\left(\frac{1}{8}\tau_-^2|\underline{\alpha}|^2+\frac{\tau_+^2}{4}\left(|\rho|^2+|\sigma|^2\right)+\frac{\tau_-^2+\tau_+^2}{4}|\underline{\beta}|^2\right){}^{(T)}\widehat{\pi}_{44}
\end{align*}
In view of \eqref{est: curvature} and improved estimate on $\underline{\alpha}$, we obtain that
\[\frac{1}{8}\tau_-^2|\underline{\alpha}|^2+\frac{\tau_+^2}{4}\left(|\rho|^2+|\sigma|^2\right)+\frac{\tau_-^2+\tau_+^2}{4}|\underline{\beta}|^2\lesssim \varepsilon_0^2 \tau_+^{-2}.\]
Therefore,
\[|I_2|\lesssim \tau_+^{-4}\varepsilon_0^3.\]
Hence,
Therefore, we have
\[\mathbf{I}_2\lesssim \int_{t_1}^{t_2} \tau_+^{-1}\varepsilon_0^3\lesssim \varepsilon_0^3 \log(t_2).\]
The rest of the ${I}_i$'s can be computed as follows: 
\begin{align*}
I_1&=\frac{1}{16}\left(\tau_-^2Q_{3344}+(\tau_-^2+\tau_+^2)Q_{3444}+\tau_+^2Q_{4444}\right){}^{(T)}\widehat{\pi}_{33}\\
&=\frac{1}{8}\left(2\tau_-^2(\rho^2+\sigma^2)+2(\tau_-^2+\tau_+^2)|\beta|^2+\tau_+^2|\alpha|^2\right){}^{(T)}\widehat{\pi}_{33},
\end{align*}
\begin{align*}
I_3&=\frac{1}{8}\left(\tau_-^2Q_{3334}+(\tau_-^2+\tau_+^2)Q_{3344}+\tau_+^2Q_{3444}\right){}^{(T)}\widehat{\pi}_{34}\\
&=\frac{1}{2}\left(\tau_-^2|\underline{\beta}|^2+(\tau_-^2+\tau_+^2)(\rho^2+\sigma^2)+\tau_+^2|\beta|^2\right){}^{(T)}\widehat{\pi}_{34},
\end{align*}
\begin{align*}
I_4&=-\frac{1}{8}\left(\tau_-^2Q_{334A}+\tau_-^2Q_{344A}+\tau_+^2Q_{434A}+\tau_+^2Q_{444A}\right){}^{(T)}\widehat{\pi}_{3A}\\
&=-\frac{1}{2}\left[\tau_-^2\left(\rho(\beta_A-\underline{\beta}_A)-\sigma(\,^*\beta_A+\,^*\underline{\beta}_A)\right)+\tau_+^2\left(\alpha_{AB}\beta_B+\rho\beta_A-\sigma\,^* {\beta}_A\right)\right]{}^{(T)}\widehat{\pi}_{3A},
\end{align*}
and
\begin{align*}
I_5&=-\frac{1}{8}\left(\tau_-^2Q_{333A}+\tau_-^2Q_{343A}+\tau_+^2Q_{433A}+\tau_+^2Q_{443A}\right){}^{(T)}\widehat{\pi}_{4A}\\
&=-\frac{1}{2}\left[\tau_+^2\left(\rho(\beta_A-\underline{\beta}_A)-\sigma(\,^*\beta_A+\,^*\underline{\beta}_A)\right)-\tau_-^2\left(\underline{\alpha}_{AB}\underline{\beta}_B+\rho\underline{\beta}_A+\sigma\,^* \underline{\beta}_A\right)\right]{}^{(T)}\widehat{\pi}_{4A}.
\end{align*}
Therefore, the corresponding $\mathbf{I}_i$'s can be controlled exactly in the same way as above: we bound the curvature terms using \eqref{est: curvature} and we bound the deformation terms by \eqref{est: deformation tensor T Kb 2}. This leads to 
\[\left|\int_{t_{1}}^{t_{2}}\phi \left(\int_{ \{\frac{r_0}{2}\leq r \leq 2r_0\}}Q_{\alpha\beta\gamma\delta}{}^{(T)}\widehat{\pi}^{\alpha\beta}\overline{K}^{\gamma}T^{\delta}\right)ds\right|\lesssim \varepsilon_0^3 \log(t_2).\]

We can repeat the above argument to derive
\[\left|\int_{t_{1}}^{t_{2}}\phi \left(\int_{ \{\frac{r_0}{2}\leq r \leq 2r_0\}}Q_{\alpha\beta\gamma\delta}{}^{(K)}\widehat{\pi}^{\alpha\beta}\overline{T}^{\gamma}T^{\delta}\right)ds\right|\lesssim \varepsilon_0^3 \log(t_2).\]

Finally, by setting $t_{1}=0$ and $t_{2}=t$, we obtain that
\begin{equation}\label{conformal energy estimates}
\int_{\Sigma_{t}}Q (\overline{K},T,T,T) \lesssim \varepsilon_0^{2}+\varepsilon_0^{3}\log(1+t).
\end{equation}

To compute the limit of $E(t)$, we decompose it into two parts:
\[E(t)=\int_{\Sigma_{t}\cap \{ |r_{0}-r|\geq r_{0}^{\frac{2}{3}}\}}Q(T,T,T,T)+\int_{\Sigma_{t}\cap \{ |r_{0}-r|\geq r_{0}^{\frac{2}{3}}\}}Q(T,T,T,T).\]
In view of \eqref{eq: bound Q by conformal Q}, the first part can be bounded by the conformal energy estimate \eqref{conformal energy estimates}:
\begin{align*}
	\int_{\Sigma_{t}\cap \{ |r_{0}-r|\geq r_{0}^{\frac{2}{3}}\}}Q(T,T,T,T)&\lesssim \int_{\Sigma_t} \tau_+^{-2}Q (\overline{K},T,T,T)\\
	&\lesssim t^{-\frac{4}{3}}\left(\varepsilon_0^{2}+\varepsilon_0^{3}\log(1+t)\right).
\end{align*}
When $t\rightarrow \infty$, its contribution to the limit is $0$.

The second part deals with the near the light cone region. We will apply the improved decay on $\underline{\alpha}$. In view of the expression of $Q(T,T,T,T)$ in \eqref{eq: computation for Q}, the worst decay term  comes from $\underline{\alpha}$. Therefore, we have
\begin{align*}
	\int_{\Sigma_{t}\cap \{ |r_{0}-r|\leq r_{0}^{\frac{2}{3}}\}}Q(T,T,T,T)&\lesssim \int_{\Sigma_{t}\cap \{ |r_{0}-r|\leq r_{0}^{\frac{2}{3}}\}} r_{0}^{-3}\varepsilon_0^2.
\end{align*}
In view of Lemma \ref{cor:volume}, we bound the volume of the region $\Sigma_{t}\cap \big\{ |r_{0}-r|\leq r_{0}^{\frac{2}{3}}\big\}$ by a universal constant times $(r_{0}+r_{0}^{\frac{2}{3}})^{3}-(r_{0}-r_{0}^{\frac{2}{3}})^{3}\lesssim r_{0}^{\frac{8}{3}}$. Hence,
\begin{align*}
	\int_{\Sigma_{t}\cap \{ |r_{0}-r|\geq r_{0}^{\frac{2}{3}}\}}Q(T,T,T,T)&\lesssim (1+t)^{-\frac{1}{3}}\varepsilon_0^2.
\end{align*}
When $t\rightarrow \infty$, the contribution of this part to the limit is also $0$. We conclude that  $\lim_{t\rightarrow\infty}E(t)=0$. This completes the proof.
\end{proof}

We are ready to prove the main theorem of the paper. We define
\[E_*=\sup_{t\geq 0}E(t).\]
We choose a $t_0\in [0,\infty)$ so that  $E(t_0)\geq \frac{1}{2}E_*$. According to Lemma \ref{lemma: lemma5}, for all $t\geq t_0$, we have
\[
	|E(t)-E(t_{0})|\lesssim \varepsilon_0 \sup_{s\in [t_{0},t]}E(s) \leq  \varepsilon_0E_*.
\]
Since $\lim_{t\rightarrow\infty}E(t)=0$, when $t\rightarrow \infty$, this leads to 
\[
	 \frac{1}{2}E_* \leq E(t_{0})\lesssim  \varepsilon_0E_*.
\]
Therefore, for sufficiently small $\varepsilon_0$, we have $E_*=0$. Hence, $E(t)\equiv 0$. This shows that all the curvature components of the spacetime vanish, i.e., $(M,g)$ is flat.

\begin{Remark}\label{remark: using positive mass theorem}
There is another way to prove the theorem via the rigidity part of the positive mass theorem. We sketch a possible proof as follows:

The first ingredient is to consider the the radiation field $\underline{\mathbf{\Xi}}$ of $\widehat{\underline{\chi}}$, which is defined by 
\[\underline{\mathbf{\Xi}}:=-\frac{1}{2}\lim_{t\rightarrow 0, \atop \\ {\text on}~C_u}r\widehat{\underline{\chi}},
\]
See Chapter 17 of \cite{Christodoulou-Klainerman}. We can prove that $\underline{\mathbf{\Xi}}\equiv 0$ provided the radiation field $\underline{\mathbf{A}}=0$.

The second ingredient is to consider the Bondi mass loss formula which is also proved in Chapter 17 of \cite{Christodoulou-Klainerman}. The formula implies that the Bondi mass $M(u)$ satisfies the following 
\[\frac{d}{du}M(u)=\frac{1}{8\pi}\int_{\mathbf{S}^2}|\underline{\mathbf{\Xi}}|^2 d\mathring{\mu}\equiv 0,\]
where $d\mathring{\mu}$ the volume form on the standard unit sphere $\mathbf{S}^2$. On one hand, the Bondi mass vanishes at timelike infinity, i.e., $\lim_{u\rightarrow +\infty}M(u)=0$. On the other hand, $\lim_{u\rightarrow -\infty}M(u)$ is the ADM mass $m_{_{\small \rm ADM}}$. Hence, $m_{_{\small \rm ADM}}=0$. Therefore, we can apply the positive mass theorem. 
\end{Remark}

\end{document}